\documentclass[11pt]{article}
\usepackage[letterpaper,margin=1in]{geometry}
\usepackage{xcolor}
\usepackage[colorlinks]{hyperref}
\definecolor{lightblue}{rgb}{0.5,0.5,1.0}
\definecolor{darkred}{rgb}{0.5,0,0}
\definecolor{darkgreen}{rgb}{0,0.5,0}
\definecolor{darkblue}{rgb}{0,0,0.5}

\hypersetup{colorlinks,linkcolor=darkred,filecolor=darkgreen,urlcolor=darkred,citecolor=darkblue}

\usepackage{tabularx}
\usepackage{mathtools}
\usepackage{amsthm}
\usepackage{thmtools}
\usepackage{verbatim}
\usepackage{enumitem}
\usepackage{authblk}
\usepackage{amssymb}
\usepackage{amsmath}
\usepackage{caption}
\usepackage{subcaption}
\usepackage{nicefrac}
\usepackage{mathtools}
\usepackage[linesnumbered,ruled,commentsnumbered,longend]{algorithm2e}
\usepackage{tikz}
\usetikzlibrary{graphs}
\usetikzlibrary{shapes.misc, positioning}
\usetikzlibrary{shapes,fit}
\usepackage[edges]{forest}
\usetikzlibrary{shapes,arrows,arrows.meta,calc,decorations.pathmorphing}
\usepackage{amsmath}
\usepackage{todonotes} 

\newcommand{\Traces}{\textsc{Traces}}

\newcommand{\saucy}{\textsc{saucy}}
\newcommand{\dejavu}{\textsc{dejavu}}

\newcommand{\sassy}{\textsc{sassy}}

\newtheorem{fact}{Fact}

\DeclareMathOperator{\Var}{Var}

\DeclareMathOperator{\Lit}{Lit}

\makeatletter
\newcommand{\Letter}[1]{\@alph{#1}}
\makeatother

\SetKwProg{Fn}{function}{}{end}\SetKwFunction{RandomAut}{Automorphisms}
\SetKwProg{Fn}{function}{}{end}\SetKwFunction{RandomIso}{Isomorphism}
\SetKwFunction{FirstLeaf}{FirstLeaf}
\SetKwFunction{RandomWalk}{RandomWalk}
\SetKwFunction{NextNodeDFS}{OneNodeDFS}
\SetKwFunction{NextNodeBFS}{OneNodeBFS}
\SetKwFunction{Some}{Some}
\SetKwFunction{Sift}{Sift}
\SetKwFunction{Overlap}{CycleOverlap}
\SetKwFunction{OrbitGraph}{OrbitGraph}
\SetKwFunction{SymmetricAction}{SymmetricAction}
\SetKwFunction{Equivalent}{EquivalentOrbits}
\SetKwFunction{RandomElement}{RandomElement}
\SetKwFunction{RandomChild}{RandomChild}
\SetKwFunction{NextChild}{NewChild}
\SetKwFunction{NextRandomChild}{NewRandomChild}
\SetKwFunction{Refine}{Refine}
\SetKwFunction{Subtree}{Subtree}
\SetKwFunction{RRef}{Ref}
\SetKwFunction{SSel}{Sel}
\SetKwFunction{Sift}{Sift}
\SetKwFunction{CertifyAutomorphism}{CertifyAutomorphism}
\SetKwFunction{CertifyIsomorphism}{CertifyIsomorphism}


\newcommand{\n}[1]{\overline{#1}}

\makeatletter

\makeatother

\DeclareMathOperator{\Sym}{Sym}
\DeclareMathOperator{\Alt}{Alt}
\DeclareMathOperator{\Aut}{Aut}

\DeclareMathOperator{\supp}{supp}
\DeclareMathOperator{\enc}{enc}

\newtheorem{lemma}{Lemma}
\newtheorem{corollary}[lemma]{Corollary}
\newtheorem{theorem}[lemma]{Theorem}

\newtheorem{definition}[lemma]{Definition}

\title{Algorithms Transcending the SAT-Symmetry Interface}

\author[1]{Markus Anders}
\author[1]{Pascal Schweitzer}
\author[2]{Mate Soos}
\affil[1]{TU Darmstadt}
\affil[2]{National University of Singapore}

\newcommand\blfootnote[1]{%
  \begingroup
  \renewcommand\thefootnote{}\footnote{#1}%
  \addtocounter{footnote}{-1}%
  \endgroup
}

\begin{document}

\maketitle

\begin{abstract}
Dedicated treatment of symmetries in satisfiability problems (SAT) is indispensable for solving various classes of instances arising in practice. However, the exploitation of symmetries usually takes a black box approach. Typically, off-the-shelf external, general-purpose symmetry detection tools are invoked to compute symmetry groups of a formula. The groups thus generated are a set of permutations passed to a separate tool to perform further analyzes to understand the structure of the groups. The result of this second computation is in turn used for tasks such as static symmetry breaking or dynamic pruning of the search space. Within this pipeline of tools, the detection and analysis of symmetries typically incurs the majority of the time overhead for symmetry exploitation.

In this paper we advocate for a more holistic view of what we call the   \emph{SAT-symmetry interface}. We formulate a computational setting, centered around a new concept of joint graph/group pairs, to analyze and improve the detection and analysis of symmetries. Using our methods, no information is lost performing computational tasks lying on the SAT-symmetry interface.
Having access to the entire input allows for simpler, yet efficient algorithms.

Specifically, we devise algorithms and heuristics for computing finest direct disjoint decompositions, finding equivalent orbits, and finding natural symmetric group actions. Our algorithms run in what we call instance-quasi-linear time, i.e., almost linear time in terms of the input size of the original formula and the description length of the symmetry group returned by symmetry detection tools.  Our algorithms improve over both heuristics used in state-of-the-art symmetry exploitation tools, as well as theoretical general-purpose algorithms.
\end{abstract}
\blfootnote{Supported by the European Research Council (ERC) under the European Union's Horizon 2020 research and innovation programme (EngageS: grant No.~{820148}).}

\section{Introduction}
Many SAT instances, especially of the hard combinatorial type, exhibit symmetries. When symmetries exhibited by these instances
are not handled adequately, SAT solvers may repeatedly explore symmetric
parts of the search space. 
This can dramatically increase runtime, sometimes making it impossible for the solver to finish within reasonable time \cite{DBLP:journals/siamcomp/BeameKPS02}.

One common method to handle the symmetries is to add symmetry breaking
formulas to the problem specification~\cite{DBLP:conf/kr/CrawfordGLR96, DBLP:conf/dac/AloulRMS02}. This approach is called static symmetry breaking.
Another, competing, approach is to handle symmetries dynamically during the running of the SAT solver. There are a variety of such dynamic strategies, exploiting symmetry information during variable branching~\cite{DBLP:conf/cp/KirchwegerS21} and learning~\cite{DBLP:journals/constraints/Sabharwal09, DBLP:conf/sat/Devriendt0B17}.
For SAT, the tools \textsc{Shatter}~\cite{DBLP:conf/dac/AloulMS03} and \textsc{BreakID}~\cite{DBLP:conf/sat/Devriendt0BD16, DBLP:conf/aaai/0001GMN22} take the static symmetry breaking approach, while \textsc{SymChaff}~\cite{DBLP:journals/constraints/Sabharwal09} and \textsc{SMS}~\cite{DBLP:conf/cp/KirchwegerS21} take the dynamic symmetry exploitation approach. 

\bigskip
While there is a growing number of competing approaches of how best to handle symmetries, there are also a number of common obstacles:
symmetries of the underlying formula have to be detected first, and the structure of symmetries has to be understood, at least to some degree.
Approaches that handle symmetries can be typically divided into three distinct steps: (Step~1) symmetry detection, (Step~2) symmetry analysis, and (Step~3) symmetry breaking, or other ways of exploiting symmetry. 
In the following, we discuss each of these steps, also illustrated on the left side of Figure~\ref{fig:satsyminterface}.

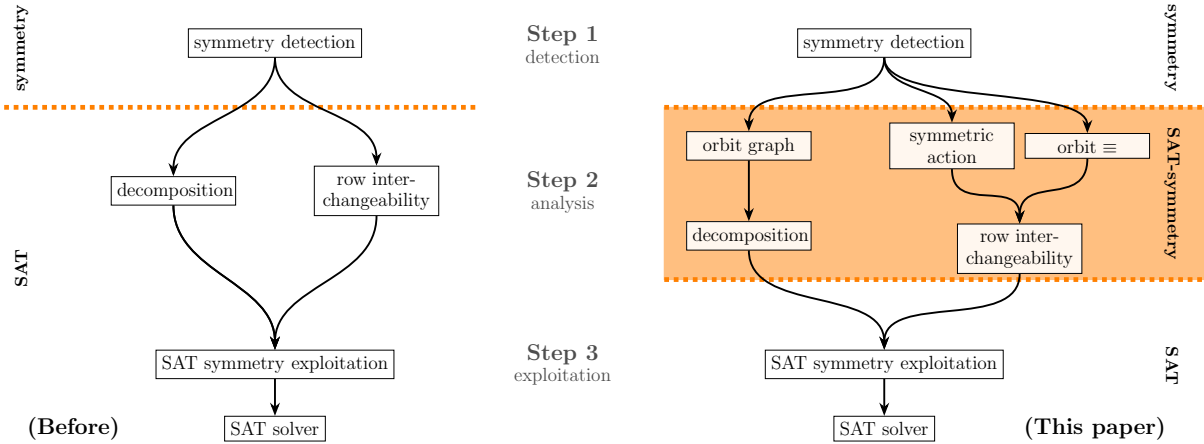
\begin{figure}
  \centering
  \scalebox{0.6}{
  \begin{tikzpicture}
  \begin{scope}[scale=3,yscale=0.95]
    \node[anchor=west] at (2-2+0.125, -0.5)  (time)  {\color{black} \Large \textbf{(Before)}};

    \node[draw, color=black,rectangle] at (2, 2+0.5)  (sd)  {\large symmetry detection};
    \node[draw, color=black,rectangle,text width=2.5cm, align=center] at (2-0.75, 1.35)  (gd)  {\large decomposition};
    \node[draw, color=black,rectangle,text width=2.5cm, align=center] at (2+0.75, 1.35)  (ci)  {\large row interchangeability};
    \node[draw, color=black,rectangle] at (2, 0)  (se)  {\large SAT symmetry exploitation};
    \node[draw, color=black,rectangle] at (2, -0.5)  (sas)  {\large SAT solver};
    
    \draw[line width=3pt,draw=orange,dashed] (2-2, 2) to (2-2+2+2-0.5, 2);
    
    \node[color=black,rotate=90] at (2-2+0.125, 0.75)  (sasymi)  {\textbf{SAT}};
    \node[color=black,rotate=90] at (2-2+0.125, 0.125+1+1.34)  (sasymi)  {\textbf{symmetry}};
  
    \draw[-{Stealth[scale=1]}, very thick] (sd) to[out=270,in=90] (gd);
    \draw[-{Stealth[scale=1]}, very thick] (sd) to[out=270,in=90]  (ci);
  
    \draw[-{Stealth[scale=1]}, very thick] (gd) to[out=270,in=90]  (se);
    \draw[-{Stealth[scale=1]}, very thick] (ci) to[out=270,in=90]  (se);
  
    \draw[-{Stealth[scale=1]}, very thick] (gd) to[out=270,in=90]  (se);
    \draw[-{Stealth[scale=1]}, very thick] (se) to[out=270,in=90]  (sas);
  \end{scope}
  \begin{scope}[scale=3,yscale=0.95,xshift=4.125cm]
  \node[color=black!70,rotate=0] at (0, 0.125+1+1.34)  (s1)  {\begin{tabular}{c}
  \Large\textbf{Step~1}\\
  \large detection
  \end{tabular}};
  \node[color=black!70,rotate=0] at (0, 0.33+1)        (s2)  {\begin{tabular}{c}
    \Large\textbf{Step~2}\\
    \large analysis
    \end{tabular}};
  \node[color=black!70,rotate=0] at (0, 0.33+1-1.34)   (s3)  {\begin{tabular}{c}
    \Large\textbf{Step~3}\\
    \large  exploitation
    \end{tabular}};
\end{scope}
  \begin{scope}[scale=3,yscale=0.95,xshift=4.5cm]
    \node[anchor=east] at (2+2+0.125, -0.5)  (time)  {\Large \color{black} \textbf{(This paper)}};

    \draw[draw=orange!50,fill=orange!50] (2-2+0.375,0.66) rectangle ++(2+2,2-0.66);
    \node[rotate=270] at (2+2+0.125, 0.33+1)  (sasymi)  {\textbf{SAT-symmetry}};

    \draw[line width=3pt,draw=orange,dashed] (2-2+0.375, 0.66) to (2-2+0.375+2+2, 0.66);
    \draw[line width=3pt,draw=orange,dashed] (2-2+0.375, 2) to (2-2+0.375+2+2, 2);

    \node[color=black,rotate=270] at (2+2+0.125, 0.33+1-1.34)  (sasymi)  {\textbf{SAT}};

    \node[color=black,rotate=270] at (2+2+0.125, 0.125+1+1.34)  (sasymi)  {\textbf{symmetry}};

    \node[draw,color=black,rectangle] at (2, 2+0.5)  (sd)  {\large symmetry detection};
    \node[draw, fill=white!75!orange!25,rectangle,text width=2.5cm, align=center] at (2-0.5+1, 1.7)  (gd)  {\large symmetric action};
    \node[draw, fill=white!75!orange!25,rectangle,text width=2.5cm, align=center] at (2+0.5+1, 1.7)  (ci)  {\large orbit $\equiv$};
    \node[draw, fill=white!75!orange!25,rectangle,text width=2.5cm, align=center] at (2-1, 1.7)  (orbitg)  {\large orbit graph};
    \node[draw, fill=white!75!orange!25,rectangle,text width=2.5cm, align=center] at (2-1, 1)  (ga)  {\large decomposition};

    \node[draw, fill=white!75!orange!25,rectangle,text width=2.5cm, align=center] at (2+1, 0.9)  (cix)  {\large row interchangeability};

    \node[draw=cyan!50, color=black,rectangle] at (2, 0)  (se)  {\large SAT symmetry exploitation};
    \node[draw=black, color=black,rectangle] at (2, -0.5)  (sas)  {\large SAT solver};
    
    \node[] at (2-2, 0.25+0.375-0.35)  (li)  {};
    \node[] at (2+2, 0.25+0.375-0.35)  (ri)  {};
  
  
    \draw[-{Stealth[scale=1]}, very thick] (sd) to[out=270,in=90,looseness=1] (gd);
    \draw[-{Stealth[scale=1]}, very thick] (sd) to[out=270,in=90,looseness=0.6] (ci);
    \draw[-{Stealth[scale=1]}, very thick] (sd) to[out=270,in=90,looseness=1] (orbitg);

    \draw[-{Stealth[scale=1]}, very thick] (orbitg) to[out=270,in=90,looseness=0.7] (ga);
  
    \draw[-{Stealth[scale=1]}, very thick] (gd) to[out=270,in=90,looseness=1.4] (cix);
    \draw[-{Stealth[scale=1]}, very thick] (cix) to[out=270,in=90] (se);
    \draw[-{Stealth[scale=1]}, very thick] (ci) to[out=270,in=90,looseness=1.4] (cix);
    \draw[-{Stealth[scale=1]}, very thick] (ga) to[out=270,in=90] (se);
  
    \draw[-{Stealth[scale=1]}, very thick] (se) to[out=270,in=90] (sas);
  \end{scope}
\end{tikzpicture}
  }
  \caption{Blurring the lines of the SAT-symmetry interface. We analyze existing practical routines (left), draw connections to existing concepts in computational group theory, and describe improved algorithms in our new SAT-symmetry context (right).} \label{fig:satsyminterface}
\end{figure}

\emph{Step~1.} In practice, symmetries are detected by modeling a given SAT formula as a graph, and then applying an off-the-shelf symmetry detection tool, such as \saucy{}~\cite{DBLP:conf/dac/DargaLSM04}, 
to the resulting graph.
Since symmetries form a permutation group under composition, a symmetry detection tool does not return all the symmetries. Instead, it only returns a small set of \emph{generators}, which, by composition, lead to all the symmetries of the formula.
Indeed, returning only a small set of generators is crucial for efficiency, since the number of symmetries is often exponential in the size of the formula.

\emph{Step~2.} Symmetry exploitation algorithms apply heuristics to analyze the structure of the group described by the generators.
This is necessary to enable the best possible use of the symmetries to improve SAT solver performance.
We mention three examples for structural analyzes.
Firstly, the \emph{disjoint direct decomposition} splits a group into independent parts that can be handled separately.
Secondly, so-called \emph{row interchangeability} subgroups of the group~\cite{DBLP:conf/sat/Devriendt0BD16, DBLP:conf/cp/FlenerFHKMPW02, DBLP:journals/mpc/PfetschR19} are of particular interest since they form a class of groups for which linear-sized, complete symmetry breaking constraints are known. 
Thirdly, \emph{stabilizers} are commonly used for various purposes among both static and dynamic approaches~\cite{DBLP:conf/cp/Puget03}.

\emph{Step~3.} Lastly, the symmetries and structural insights are used to reduce the search space in SAT using one of the various static and dynamic symmetry exploitation approaches.

\bigskip
Designing symmetry exploitation algorithms typically involves delicately balancing computational overhead versus how thoroughly symmetries are used. 
In this trade-off, symmetry detection (Step~1) and analysis (Step~2) typically induce the majority of the overhead \cite{DBLP:conf/sat/Devriendt0BD16}. 
The main focus of this paper is improving the analysis of symmetries, i.e. (Step~2).

Practical implementations in use today that perform such structural analyzes do so through heuristics. While using heuristics is not an issue per se, some heuristics currently in use strongly depend on properties that the generators returned by symmetry detection tools may or may not exhibit.
For example, \textsc{BreakID} and the MIP heuristic in~\cite{DBLP:journals/mpc/PfetschR19} both rely on so-called \emph{separability} of the generating set and a specific arrangement of \emph{transpositions} being present. Neither of these properties are guaranteed by contemporary symmetry detection tools \cite{DBLP:journals/jsc/ChangJ22}.

In fact, modern symmetry detection tools such as \Traces{} \cite{DBLP:journals/jsc/McKayP14} and \dejavu{} \cite{DBLP:conf/esa/AndersS21} return randomly selected symmetries, since the use of randomization provides an asymptotic advantage in the symmetry detection process itself~\cite{DBLP:conf/icalp/AndersS21}. However, generating sets consisting of randomly selected symmetries are in a sense the exact opposite of what is desired for the heuristics, since with high probability random symmetries satisfy neither of the required conditions.
This is particularly unfortunate, as \dejavu{} is currently the fastest symmetry detection tool available for graphs stemming from SAT instances~\cite{DBLP:journals/corr/abs-2302-06351}.

Another downside of the use of practical heuristics for the structural analysis of the group is that they are often also computationally expensive
and make up a large portion of the runtime of the overall symmetry exploitation process.
For example, the row interchangeability algorithm of \textsc{BreakID} performs multiple callbacks to the underlying symmetry detection tool, where each call can be expensive.

Altogether, heuristics in use today sometimes cause significant overhead, while also posing an obstacle to speeding up symmetry detection itself.
This immediately poses the question: why is it that these heuristics are currently in place that cause such a loss of efficiency when it comes to computations within the SAT-symmetry interface?

We believe that the issue is that tools on either side of the interface treat each other as a black box. Indeed, when considered as an isolated task, algorithms for the analysis of permutation groups are well-researched in the area of computational group theory \cite{seress_2003}. Not only is the theory well-understood, but there are also highly efficient implementations \cite{GAP4}. However, we can make two crucial observations regarding the available algorithms. 
First and foremost, for group theoretic algorithms from the literature that are deemed to have linear or nearly-linear runtime~\cite{seress_2003}, the concrete runtime notions actually differ from the ones applicable in the overall context. In fact, the runtime is essentially measured in 
terms of a dense rather than a sparse input description. Therefore, in the context of SAT-solving or graph algorithms, the runtime of these algorithms should rather be considered quadratic. Secondly, in computational group theory, algorithms assume that only generators for an input group are available.
However, in the context of the SAT-symmetry interface, not only a group but also a graph (computed from the original formula) is available. It turns out as a key insight of our paper that lacking access to the graphs crucially limits the design space for efficient algorithms.

\noindent \textbf{Contributions.}
Advocating a holistic view of the SAT-symmetry interface, we develop algorithms that transcend both into  the SAT domain and the symmetry domain at the same time.
This is illustrated in Figure~\ref{fig:satsyminterface} on the right side. 

Firstly, we provide a definition for the computational setting such as input, output, and runtime, under which these algorithms should operate (Section~\ref{sec:setting}).
We then extract precise formal problem definitions from heuristics implemented in state-of-the-art tools (Section~\ref{sec:satandsym}).
Lastly, we demonstrate the efficacy of our new approach by providing faster theoretical algorithms for commonly used heuristics, as is described below. 

\noindent \textbf{Computational Setting.}
In our new computational setting, algorithms take as input a \emph{joint graph/group pair}, meaning a group $S$ and corresponding graph $G$, whose symmetry group is precisely $\langle S \rangle$.
We define a precise notion of \emph{instance-linear time}, meaning it is linear in the encoding size of the SAT formula, graph, and group.

\noindent \textbf{New Algorithms.} Given a joint graph/group pair, we develop and analyze the following algorithms:
\begin{enumerate}
  \item[\textbf{A1}] An instance-linear algorithm for computing the \emph{finest direct disjoint decomposition} of the symmetry group of a graph (Section~\ref{sec:decompose}). 
  We also give a heuristic specific to SAT formulas, decomposing the symmetry group on the literals.
  \item[\textbf{A2}] An algorithm to simultaneously detect \emph{natural symmetric group actions} on all the orbits of a group (Section~\ref{sec:snaction}). Here we exploit randomized techniques from computational group theory for the detection of ``giant'' permutation groups. 
  We give instance-linear heuristics which are able to exploit properties of the SAT-symmetry interface. 
  \item[\textbf{A3}] An instance-quasi-linear algorithm to compute \emph{equivalent symmetric orbits}, under some mild assumptions about the generating set (Section~\ref{sec:rowinterchange}).
  In conjunction with (A2), this enables us to detect \emph{all} elementary row interchangeability subgroups.
\end{enumerate}
Both (A1) and (A3) improve the (at least) quadratic runtime of previous, general-purpose permutation group algorithms of \cite{DBLP:journals/jsc/ChangJ22} and \cite{seress_2003}, respectively.

\section{Preliminaries and Related Work} \label{sec:preliminaries}

\noindent \textbf{Graphs and Symmetries.}
A colored graph $G = (V, E, \pi)$ consists of a set of vertices $V$, edges $E \subseteq V \times V$, and a vertex coloring $\pi \colon V \to C$ which maps $V$ to some set of colors $C$. We use $V(G)$, $E(G)$, and $\pi(G)$ to refer to the vertices, edges, and coloring of $G$, respectively.

A symmetry, or \emph{automorphism}, of a colored graph $G = (V, E, \pi)$ is a bijection $\varphi\colon V \to V$ such that $\varphi(E) = E$ as well as $\pi(v) = \pi(\varphi(v))$ for all $v \in V$.
In other words, symmetries preserve the neighborhood relation of the graph, as well as the coloring of vertices.
The colors of vertices in the graph are solely used to ensure that distinctly colored vertices are not mapped onto each other using symmetries. 
Together, all symmetries of a graph form a permutation group under composition, which we call the \emph{automorphism group} $\Aut(G)$. 

In this paper, we call software tools computing the automorphism group of a graph \emph{symmetry detection tools} \cite{DBLP:journals/jsc/McKayP14, DBLP:conf/dac/DargaLSM04, DBLP:conf/tapas/JunttilaK11, DBLP:journals/jsc/McKayP14, DBLP:conf/esa/AndersS21}. 
In the literature, these tools are also often called \emph{practical graph isomorphism solvers}. 
In this paper, we avoid the use of this term in order not to confuse them with SAT \emph{solvers}.

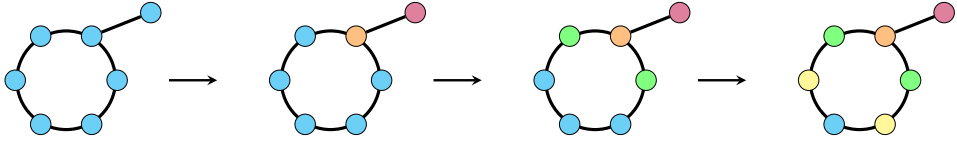
\begin{figure}
  \centering
\scalebox{0.75}{
\begin{tikzpicture}[scale=1.5]
 \node [circle, draw, minimum size=1.75cm, ultra thick] (c) at (0*1.25,0) {};
  \foreach \i in {0,...,5}{
    \ifthenelse{\i=1}{
      \node[draw, fill=cyan!50, circle] at (c.\i*360/6) (a\i)  {};
    }{
      \node[draw, fill=cyan!50, circle] at (c.\i*360/6) (a\i)  {};
    }
  }
  \node[draw, fill=cyan!50, circle] at (0.8*1.25,0.8) (a7)  {};
  \draw[ultra thick,] (a7) -- (a1);


  \node [circle, draw, minimum size=1.75cm, ultra thick] (c) at(2.5*1.25,0) {};
  \foreach \i in {0,...,5}{
    \ifthenelse{\i=1}{
      \node[draw, fill=orange!50, circle] at (c.\i*360/6) (a\i)  {};
    }{
      \node[draw, fill=cyan!50, circle] at (c.\i*360/6) (a\i)  {};
    }
  }
  \node[draw, fill=purple!50, circle] at (2.5*1.25+0.8*1.25,0.8) (a7)  {};
  \draw[ultra thick,] (a7) -- (a1);


  \node [circle, draw, minimum size=1.75cm, ultra thick] (c) at(5*1.25,0) {};
  \foreach \i in {0,...,5}{
    \ifthenelse{\i=1}{
      \node[draw, fill=orange!50, circle] at (c.\i*360/6) (a\i)  {};
    }{
      \ifthenelse{\i=0 \OR \i=2}{
      \node[draw, fill=green!50, circle] at (c.\i*360/6) (a\i)  {};
      }{
        \node[draw, fill=cyan!50, circle] at (c.\i*360/6) (a\i)  {};
      }
    }
  }
  \node[draw, fill=purple!50, circle] at (5*1.25+0.8*1.25,0.8) (a7)  {};
  \draw[ultra thick,] (a7) -- (a1);


  \node [circle, draw, minimum size=1.75cm, ultra thick] (c) at(7.5*1.25,0) {};
  \foreach \i in {0,...,5}{
    \ifthenelse{\i=1}{
      \node[draw, fill=orange!50, circle] at (c.\i*360/6) (a\i)  {};
    }{
      \ifthenelse{\i=0 \OR \i=2}{
      \node[draw, fill=green!50, circle] at (c.\i*360/6) (a\i)  {};
      }{
        \ifthenelse{\i=4}{
        \node[draw, fill=cyan!50, circle] at (c.\i*360/6) (a\i)  {};
        }{
          \node[draw, fill=yellow!50, circle] at (c.\i*360/6) (a\i)  {};
        }
      }
    }
  }
  \node[draw, fill=purple!50, circle] at (7.5*1.25+0.8*1.25,0.8) (a7)  {};
  \draw[ultra thick,] (a7) -- (a1);


   \foreach \i in {1,...,3}{
    \node[] at (2.5*\i*1.25 -1.8*1.25 + 0.25,0) (x\i)  {};
    \node[] at (2.5*\i*1.25 -1.2*1.25 + 0.25,0) (y\i)  {};
    \draw[-stealth, very thick] (x\i) to (y\i);
   }
\end{tikzpicture}}
  \caption{An illustration of a color refinement process.} \label{fig:colorrefinement}
\end{figure}

\noindent \textbf{Color refinement.}
A common algorithm applied when computing the symmetries of a graph is \emph{color refinement}. 
Given a colored graph $G = (V, E, \pi)$, color refinement \emph{refines} the coloring $\pi$ of $G$ into $G' = (V, E, \pi')$. 
Crucially, the automorphism group remains invariant under color refinement, i.e., $\Aut(G) = \Aut(G')$.

We now describe the algorithm.
If two vertices in some color~$X = \pi^{-1}(c)$ have a different number of neighbors in another color~$Y = \pi^{-1}(c')$,
then $X$ can be split by partitioning it according to neighbor counts in $Y$. 
After the split, two vertices have the same color precisely if they had the same color before the split, and they have the same number of neighbors in $Y$.
We repeatedly split classes with respect to other classes until no further splits are possible.  
Figure~\ref{fig:colorrefinement} shows an illustration of the color refinement procedure.
A coloring which does not admit further splits is called \emph{equitable}. 
For a graph $G$, color refinement can be computed in time $\mathcal{O}((|V(G)| + |E(G)|) \log |V(G)|)$ \cite{McKay81practicalgraph,DBLP:conf/wea/Piperno18}.

Let us also recall a different definition for equitable colorings:
A coloring~$\pi$ of a graph is equitable if for all pairs of (not necessarily distinct) color classes~$C_1,C_2$, all vertices in $C_1$ have the same number of neighbors in $C_2$ (i.e., $|N(v)\cap C_2|= |N(v')\cap C_2|$ for all $v,v'\in C_1$).
Given a coloring $\pi$, color refinement computes an equitable refinement $\pi'$, i.e., an equitable coloring $\pi'$ for which $\pi'(v)=\pi'(v')$ implies~$\pi(v)=\pi(v')$. 
In fact, it computes the \emph{coarsest} equitable refinement.

\noindent \textbf{Permutation Groups.}
The \emph{symmetric group} $\Sym(\Omega)$ is the permutation group consisting of all permutations of the set $\Omega$.
A \emph{permutation group} on \emph{domain}~$\Omega$ is a group $\Gamma$ that is a subgroup of~$\Sym(\Omega)$, denoted~$\Gamma \leq \Sym(\Omega)$.
For a subset of the domain $\Omega' \subseteq \Omega$, the \emph{restriction} of~$\Gamma$ to $\Omega'$ is $\Gamma|_{\Omega'} \coloneqq \{\varphi|_{\Omega'}\;|\; \varphi \in \Gamma \}$ (where $\varphi|_{\Omega}$ denotes restricting the domain of $\varphi$ to $\Omega$). The restriction is not necessarily a group since the images need not be in~$\Omega'$.
The \emph{pointwise stabilizer} is the group $\Gamma_{(\Omega')} \coloneqq \{\varphi \in \Gamma \;|\; \forall p \in \Omega': \varphi(p) = p\}$, obtained by fixing all points of $\Omega'$ individually.

Whenever we are dealing with groups, we use a specific, succinct encoding.
Instead of explicitly representing each element of the group, we only store a subset that is sufficient to obtain any other element through composition. 
Formally, let $S$ be a subset of the group $\Gamma$, i.e., $S \subseteq \Gamma$. 
We call $S$ a \emph{generating set} of $\Gamma$ whenever we obtain precisely $\Gamma$ when exhaustively composing elements of $S$.
We write  $\langle S \rangle = \Gamma$.
Moreover, each individual element $\varphi \in S$ can be referred to as a \emph{generator} of $\Gamma$.

We write $\supp(\varphi) \coloneqq \{\omega \;|\; \omega \in \Omega \wedge \varphi(\omega) \neq \omega\}$ for the \emph{support of a map}, meaning points of $\Omega$ not fixed by $\varphi$.
The \emph{support of a group} $\Gamma \in \Sym(\Omega)$ is the union of all supports of elements of $\Gamma$, i.e., $\supp(\Gamma) \coloneqq \{\omega \;|\; \omega \in \Omega \wedge \exists \varphi \in \Gamma: \varphi(\omega) \neq \omega\}$. 

We use the \emph{cycle notation} for permutations $\varphi \colon \Omega \to \Omega$.
The permutation of~$\{1,\ldots,5\}$ given by $1 \mapsto 2, 2 \mapsto 3, 3 \mapsto 1, 4 \mapsto 5, 5 \mapsto 4$ we write as $(1,2,3)(4,5)$.
Note that, for example $(1,2,3)(5,4)$ and~$(3,1,2)(4,5)$ denote the same permutation.
Algorithmically the cycle notation enables us to read and store a permutation $\varphi$ in time $\supp(\varphi)$.  

When considering two permutation groups~$\Gamma$ and~$\Gamma'$ it is possible that the groups are isomorphic as abstract groups but not as permutation groups. For example, if we let the symmetric group~$\Sym(\Omega)$ act component-wise on pairs of elements of~$\Omega$, we obtain a permutation group with domain~$\Omega^2$ that also has~$|\Omega|!$ many elements. In fact this group is isomorphic to~$\Sym(\Omega)$ as an abstract group. We say a group~$\Gamma$ is a \emph{symmetric group in natural action} if the group is~$\Sym(\Omega)$, where~$\Omega$ is the domain of~$\Gamma$.

\noindent \textbf{SAT and Symmetries.} A Boolean satisfiability (SAT) instance $F$ is commonly given in \emph{conjunctive normal form} (CNF), which we denote with
$F = \{(l_{1,1} \vee \cdots{} \vee l_{1,k_1}), \ldots{} ,(l_{m,1} \vee \cdots{} \vee l_{m,k_m})\}$, where each element of $F$ is called a clause. A clause itself consists of a set of \emph{literals}. A literal is either a variable or its negation.
We use $\Var(F) \coloneqq \{v_1, \dots{}, v_n\}$ for the set of \emph{variables} of $F$ and we use $\Lit(F)$ for its literals.

A symmetry, or \emph{automorphism}, of $F$ is a permutation of the literals $\varphi \colon \Lit(F) \to \Lit(F)$ satisfying the following two properties. First, it maps $F$ back to itself, i.e., $\varphi(F) \equiv F$, where $\varphi(F)$ is applied element-wise to the literals in each clause. Here clauses are equivalent, if they are the same when treated as unordered sets of literals, for example $(x \vee y) \equiv (y \vee x)$. Then~$F'\equiv F$ if~$F'$ is obtained from~$F$ by reordering the literals of~$F'$ within the clauses.
Second, for all $l \in \Lit(F)$ it must hold that $\n{\varphi(l)} = \varphi(\n{l})$, i.e., $\varphi$ induces a permutation of the variables. 
For example the permutation mapping~$x_i$ to~$\neg{x_{i+1}}$ and~$x_i$ to~$x_{i+1}$, with indices taken modulo~4, is a symmetry of~$(x_1 \vee \neg x_2 \vee x_3\vee \neg  x_4)\wedge (x_4 \vee \neg x_1 \vee x_2 \vee \neg x_3)$. 
The permutation group of all symmetries of $F$ is $\Aut(F) \leq \Sym(\Lit(F))$.

It is well understood that the symmetries of a SAT formula $F$ can be captured by a graph. 
We call this the \emph{model graph} and denote it with $\mathcal{M}(F)$.\label{model-graph}
While there exists various constructions for the model graphs, we use the following common construction. 
Each literal $l \in \Lit(F)$ is associated with a vertex $l$. 
Each clause $C \in F$ is associated with a vertex~$C$. 
All pairs of literals $l$ and $\n{l}$ are connected by an edge. 
For all literals $l \in C$ of a clause $C$, we connect vertices $l$ and $C$.
Lastly, to distinguish clause vertices from literal vertices, we color all clauses with color $0$ and all literals with color $1$.
As desired, for this graph, $\Aut(F) = \Aut(\mathcal{M}(F))|_{\Lit(F)}$ holds~\cite{DBLP:series/faia/Sakallah21}.

Consider the formula $F_E \coloneqq \{(x_1 \vee \n{y_1}), (x_2 \vee \n{y_2}), (x_3 \vee \n{y_3}), (x_1\vee x_2 \vee x_3\vee z_1\vee z_2)\}$.
Throughout the paper, we use $F_E$ as our running example.
Figure~\ref{fig:model_graph} shows its model graph.
Regarding the symmetries of $F_E$, note that there are symmetries interchanging all of $x_1, x_2, x_3$, all of $y_1, y_2, y_3$ and of the $z_1, z_2$. 
A generating set $S_E$ for $\Aut(F_E)$ is $$S_E = \{(x_1,x_2,x_3)(\n{x_1},\n{x_2},\n{x_3})(y_1,y_2,y_3)(\n{y_1},\n{y_2},\n{y_3}),(x_1,x_2)(\n{x_1},\n{x_2})(y_1,y_2)(\n{y_1},\n{y_2}),(z_1,z_2)\}.$$
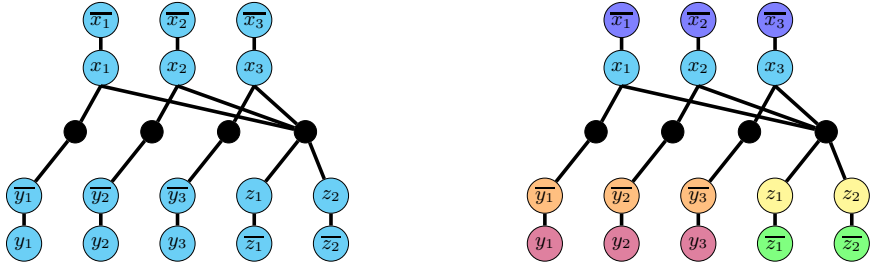
\begin{figure}
	\centering
	\scalebox{0.85}{\begin{tikzpicture}[every node/.style = {draw, circle, fill=black, radius=0.15,minimum size=5.5mm, inner sep=0},yscale=0.5,xscale=0.4]
	\draw (0,0) node [fill=cyan!50] (x1) {\footnotesize $x_1$};
	\draw (0,1.5) node[fill=cyan!50] (nx1) {\footnotesize $\n{x_1}$};
	\draw[ultra thick,-] (x1) -- (nx1);
	
	\draw (3,0)  node[fill=cyan!50] (x2) {\footnotesize $x_2$};
	\draw (3,1.5)  node[fill=cyan!50] (nx2) {\footnotesize $\n{x_2}$};
	\draw[ultra thick,-] (x2) -- (nx2);
	
	\draw (6,0)  node[fill=cyan!50] (x3) {\footnotesize $x_3$};
	\draw (6,1.5)  node[fill=cyan!50] (nx3) {\footnotesize $\n{x_3}$};
	\draw[ultra thick,-] (x3) -- (nx3);
	
	\draw (-4.5+0+1.5,-5.5) node[fill=cyan!50] (y1) {\footnotesize $y_1$};
	\draw (-4.5+0+1.5,-4) node[fill=cyan!50] (ny1) {\footnotesize $\n{y_1}$};
	\draw[ultra thick,-] (y1) -- (ny1);
	
	\draw (-4.5+3+1.5,-5.5)  node[fill=cyan!50] (y2) {\footnotesize $y_2$};
	\draw (-4.5+3+1.5,-4)  node[fill=cyan!50] (ny2) {\footnotesize $\n{y_2}$};
	\draw[ultra thick,-] (y2) -- (ny2);
	
  \draw (-4.5+6+1.5,-5.5)  node[fill=cyan!50] (y3) {\footnotesize $y_3$};
	\draw (-4.5+6+1.5,-4)  node[fill=cyan!50] (ny3) {\footnotesize $\n{y_3}$};
	\draw[ultra thick,-] (y3) -- (ny3);

  \draw (-4.5+6  +1.5*3,-4)  node[fill=cyan!50] (z1) {\footnotesize $z_1$};
	\draw (-4.5+6+1.5*3,-5.5)  node[fill=cyan!50] (nz1) {\footnotesize $\n{z_1}$};
	\draw[ultra thick,-] (z1) -- (nz1);

  \draw (-4.5+6  +1.5*5,-4)  node[fill=cyan!50] (z2) {\footnotesize $z_2$};
	\draw (-4.5+6+1.5*5,-5.5)  node[fill=cyan!50] (nz2) {\footnotesize $\n{z_2}$};
	\draw[ultra thick,-] (z2) -- (nz2);

	\draw (12/4 * 0-1,-2)  node[minimum size=3.5mm] (c1) {};
	\draw (12/4 * 1-1,-2)  node[minimum size=3.5mm] (c2) {};
	\draw (12/4 * 2-1,-2)  node[minimum size=3.5mm] (c3) {};
	\draw (12/4 * 3-1,-2)  node[minimum size=3.5mm] (c4) {};

  \draw[ultra thick,-] (c1) -- (x1.south);
  \draw[ultra thick,-] (c2) -- (x2.south);
  \draw[ultra thick,-] (c3) -- (x3.south);
  \draw[ultra thick,-] (c1) -- (ny1);
  \draw[ultra thick,-] (c2) -- (ny2);
  \draw[ultra thick,-] (c3) -- (ny3);

  \draw[ultra thick,-] (c4) -- (x1.south);
  \draw[ultra thick,-] (c4) -- (x2.south);
  \draw[ultra thick,-] (c4) -- (x3.south);
  \draw[ultra thick,-] (c4) -- (z1);
  \draw[ultra thick,-] (c4) -- (z2);
	\end{tikzpicture}}
  \hspace{2cm}
  \scalebox{0.85}{
    \begin{tikzpicture}[every node/.style = {draw, circle, fill=black, radius=0.15,minimum size=5.5mm, inner sep=0},yscale=0.5,xscale=0.4]
    \draw (0,0) node [fill=cyan!50] (x1) {\footnotesize $x_1$};
    \draw (0,1.5) node[fill=blue!50] (nx1) {\footnotesize $\n{x_1}$};
    \draw[ultra thick,-] (x1) -- (nx1);
    
    \draw (3,0)  node[fill=cyan!50] (x2) {\footnotesize $x_2$};
    \draw (3,1.5)  node[fill=blue!50] (nx2) {\footnotesize $\n{x_2}$};
    \draw[ultra thick,-] (x2) -- (nx2);
    
    \draw (6,0)  node[fill=cyan!50] (x3) {\footnotesize $x_3$};
    \draw (6,1.5)  node[fill=blue!50] (nx3) {\footnotesize $\n{x_3}$};
    \draw[ultra thick,-] (x3) -- (nx3);
    
    \draw (-4.5+0+1.5,-5.5) node[fill=purple!50] (y1) {\footnotesize $y_1$};
    \draw (-4.5+1.5,-4) node[fill=orange!50] (ny1) {\footnotesize $\n{y_1}$};
    \draw[ultra thick,-] (y1) -- (ny1);
    
    \draw (-4.5+3+1.5,-5.5)  node[fill=purple!50] (y2) {\footnotesize $y_2$};
    \draw (-4.5+4.5,-4)  node[fill=orange!50] (ny2) {\footnotesize $\n{y_2}$};
    \draw[ultra thick,-] (y2) -- (ny2);
    
    \draw (-4.5+6+1.5,-5.5)  node[fill=purple!50] (y3) {\footnotesize $y_3$};
    \draw (-4.5+7.5,-4)  node[fill=orange!50] (ny3) {\footnotesize $\n{y_3}$};
    \draw[ultra thick,-] (y3) -- (ny3);
  
    \draw (-4.5+6  +1.5*3,-4)  node[fill=yellow!50] (z1) {\footnotesize $z_1$};
    \draw (-4.5+6+1.5*3,-5.5)  node[fill=green!50] (nz1) {\footnotesize $\n{z_1}$};
    \draw[ultra thick,-] (z1) -- (nz1);
  
    \draw (-4.5+6  +1.5*5,-4)  node[fill=yellow!50] (z2) {\footnotesize $z_2$};
    \draw (-4.5+6+1.5*5,-5.5)  node[fill=green!50] (nz2) {\footnotesize $\n{z_2}$};
    \draw[ultra thick,-] (z2) -- (nz2);
  
    \draw (12/4 * 0-1,-2)  node[minimum size=3.5mm] (c1) {};
	\draw (12/4 * 1-1,-2)  node[minimum size=3.5mm] (c2) {};
	\draw (12/4 * 2-1,-2)  node[minimum size=3.5mm] (c3) {};
	\draw (12/4 * 3-1,-2)  node[minimum size=3.5mm] (c4) {};
  
    \draw[ultra thick,-] (c1) -- (x1.south);
    \draw[ultra thick,-] (c2) -- (x2.south);
    \draw[ultra thick,-] (c3) -- (x3.south);
    \draw[ultra thick,-] (c1) -- (ny1);
    \draw[ultra thick,-] (c2) -- (ny2);
    \draw[ultra thick,-] (c3) -- (ny3);
  
    \draw[ultra thick,-] (c4) -- (x1.south);
    \draw[ultra thick,-] (c4) -- (x2.south);
    \draw[ultra thick,-] (c4) -- (x3.south);
    \draw[ultra thick,-] (c4) -- (z1);
    \draw[ultra thick,-] (c4) -- (z2);
    \end{tikzpicture}
    }
	\caption{Example model graph $\mathcal{M}(F_E) = \mathcal{M}(\{(x_1 \vee \n{y_1}), (x_2 \vee \n{y_2}), (x_3 \vee \n{y_3}), (x_1\vee x_2 \vee x_3\vee z_1\vee z_2)\})$. The automorphisms of the graph correspond to the automorphisms of the formula. The coloring on the right side shows the orbit partition.}
	\label{fig:model_graph}
\end{figure}

\noindent \textbf{Orbits.}
Given a permutation group $\Gamma \leq \Sym(\Omega)$, we denote with $\omega^\Gamma \subseteq \Omega$ the \emph{orbit} of a point $\omega \in \Omega$. 
That is, an element $\omega' \in \Omega$ is in $\omega^\Gamma$ whenever there is a $\varphi \in \Gamma$ with $\varphi(\omega') = \omega$. 

The orbits of our example $\Aut(F_E)$ are shown in Figure~\ref{fig:model_graph}, e.g., the orbit $\{\n{z_1}, \n{z_2}\}$ is green. 
\section{SAT-Symmetry Computational Setting} \label{sec:setting}
Let us describe the computational setting in which our new algorithms operate. Since we want the theoretical runtimes to reflect more closely the runtimes in practice, there are two important differences compared to the traditional computational group theory setting. 
These differences are in the measure of runtime as well as in the format of the input.

\noindent \textbf{Joint Graph/Group Pairs.} 
Typically, algorithms in computational group theory dealing with permutation groups assume as their input a generating set of permutations $S$ of a group $\Gamma = \langle S \rangle$.
While this is certainly a natural setting when discussing algorithms for groups in general, 
in our setting this input format disregards further information that is readily available. Therefore, we require that algorithms in the SAT-symmetry interface have access to more information about the input group.
Specifically, we may require that the input consists of \emph{both} a generating set $S$ and a graph $G$ with $\langle S \rangle = \Aut(G)$.
We call this a \emph{joint graph/group pair} $(G, S)$. 
For our SAT context, we may moreover assume that the SAT formula $F$ with $\mathcal{M}(F) = G$ is available, whenever necessary.

\noindent \textbf{Instance-Linear Runtime.} 
In computational group theory, given a generating set $S$ for a permutation group $\langle S \rangle \leq \Sym(\Omega)$, a runtime of $\mathcal{O}(|S||\Omega|)$ is typically considered linear time~\cite{seress_2003}.
This is however only a very crude upper bound when seen in terms of the actual encoding size of a given generating set. In particular, when generators are sparse, as is common in SAT \cite{DBLP:series/faia/Sakallah21, DBLP:conf/dac/DargaLSM04},
linear time in this sense is \emph{not} necessarily linear time in the encoding size, which is what we would use in a graph-theoretic or SAT context.

Specifically, we are interested in measuring the runtime of algorithms relative to the encoding size of a generating set given in a sparse format.
Therefore, we define the encoding size of a generating set $S$ as $\enc(S) \coloneqq \Sigma_{p \in S} |\supp(p)|$.

In particular, given a SAT formula $F$, graph $G = (V, E)$, and generating set $S$, the goal is to have algorithms that (ideally) run in time linear in $|F| + |V| + |E| + \enc(S)$.
In order to not confuse the ``types of linear time'', we refer to such algorithms as \emph{instance-linear}. Analogously, an algorithm has \emph{instance-quasi-linear} time if it runs in time~$\mathcal{O}((|F| + |V| + |E| + \enc(S))\cdot (\log(|F| + |V| + |E| + \enc(S)))^c)$ for some constant~$c$.

\noindent \textbf{Illustrative Examples.}
The task of computing the orbits is an excellent example demonstrating the usefulness instance-quasi-linear time. As transitive closure, we can find the orbit~$\Delta$ of an element in time $\mathcal{O}(|\Delta||S|)$~\cite{seress_2003}. 
However, with instance-quasi-linear time in mind, we quickly arrive at an algorithm to compute the entire orbit partition in time~$\mathcal{O}(\enc(S)\cdot \alpha(\enc(S)))$ using a union-find data structure, where $\alpha$ is the inverse Ackermann function. The inverse Ackermann function exhibits substantially slower growth than $\log(n)$.

Furthermore, having access to the graph of a graph/group pair $(G,S)$ gives a significant advantage in what is algorithmically possible. A good example of the difference
is that testing membership $\varphi \in \langle S \rangle$ is much easier for the graph/group pair:
testing $\varphi(G) = G$ (which is true if and only if $\varphi \in \langle S \rangle$) can be done in instance-linear time.
However, testing $\varphi \in \langle S \rangle$ without access to the graph is much more involved. 
The best known method for the latter involves computing a strong generating set, corresponding base and Schreier table~\cite{seress_2003}, followed by an application of the fundamental sifting algorithm~\cite{seress_2003}. 
Even performing only the last step of this process (sifting) is not guaranteed to be in instance-linear time.

\section{Favorable Group Structures in SAT} \label{sec:satandsym}
We now propose problems which should be solved within the SAT-symmetry computational setting.   
We analyze heuristics used in advanced symmetry exploitation algorithms \cite{DBLP:conf/sat/Devriendt0BD16, DBLP:journals/mpc/PfetschR19, Grayland2009MinimalOC}, extracting precise formal definitions.

\noindent \textbf{Disjoint Direct Decomposition.}
Following~\cite{DBLP:journals/jsc/ChangJ22}, we say a direct product of a permutation group $\Gamma = \Gamma_1 \times \Gamma_2 \times \cdots \times \Gamma_r$ is a \emph{disjoint direct decomposition} of $\Gamma$, whenever all $\Gamma_i$ have pairwise disjoint supports. We call $\Gamma_i$ a \emph{factor} of the disjoint direct decomposition of $\Gamma$.
A disjoint direct decomposition is \emph{finest}, if we cannot decompose any factor further into a non-trivial disjoint direct decomposition.
A recent algorithm solves the problem of computing the finest disjoint direct decomposition
for permutation groups in polynomial-time~\cite{DBLP:journals/jsc/ChangJ22}.

In our running example, the finest disjoint direct decomposition of $\Aut(F_E)$ splits the group into a subgroup $H_1$ permuting only the $x_i$ and $y_i$ variables, and a subgroup $H_2$ permuting $z_1$ and~$z_2$.
Indeed, setting $H_1 = \langle\{(x_1,x_2,x_3)(\n{x_1},\n{x_2},\n{x_3})(y_1,y_2,y_3)(\n{y_1},\n{y_2},\n{y_3}),\allowbreak (x_1,x_2)(\n{x_1},\n{x_2})(y_1,y_2)(\n{y_1},\n{y_2})\} \rangle$, and 
$H_2 = \langle\{(z_1,z_2)\}\rangle$, we have~$\Aut(F_E) = H_1 \times H_2$.

Computing a disjoint direct decomposition is a typical routine in symmetry exploitation tools \cite{DBLP:conf/sat/Devriendt0BD16, DBLP:journals/mpc/PfetschR19, Grayland2009MinimalOC}.
It allows for separate treatment of each factor of the decomposition.
The heuristics in use today do not guarantee that the decomposition is the finest disjoint direct decomposition:
indeed, the heuristics of the tools mentioned above assume that the given generating sets are already \emph{separable} \cite{DBLP:journals/jsc/ChangJ22}. 
This means it is assumed, that every generator $\varphi \in S$ only operates on one factor of the disjoint direct decomposition $\Gamma_1 \times \Gamma_2 \times \dots\times \Gamma_r$. 
Formally, this means for each $\varphi \in S$ there is one $i \in \{1, \dots{}, r\}$ for which $\supp(\varphi) \cap \supp(\Gamma_i) \neq \emptyset$ holds, and for all $j \in \{1, \dots{}, r\}, j \neq i$ it holds that $\supp(\varphi) \cap \supp(\Gamma_i) = \emptyset$.
For example, the generating set $S_E$ we gave for $\Aut(F_E)$ is separable. 

It is not known how often generating sets given for graphs of SAT formulas are separable for a given symmetry detection tool, or in particular after reducing the domain to the literals of the SAT formula.
It is however obvious that for the most advanced general-purpose symmetry detection tools, \Traces{} and \dejavu{}, generators are not separable with very high probability due to the use of randomly selected generators \cite{DBLP:journals/jsc/ChangJ22}.

\noindent \textbf{Row Interchangeability.} 
We now discuss the concept of row interchangeability \cite{DBLP:conf/sat/Devriendt0BD16, DBLP:conf/cp/FlenerFHKMPW02, DBLP:journals/mpc/PfetschR19}.
Let $F$ be a SAT formula.
Let $M$ be a variable matrix $M \colon \{1, \dots{}, r\} \times \{1, \dots{}, c\} \to \Var(F)$.
We denote the entries of $M$ with $x_{i,j}$ where $i \in \{1, \dots{}, r\}$ and $j \in \{1, \dots{}, c\}$. We define the shorthand $\supp(M) \coloneqq \{x_{i,j} \mid x_{i,j} \in M\} \cup \{\n{x_{i,j}} \mid x_{i,j} \in M\}$. The set $\supp(M)$ denotes all the literals involved with the matrix $M$.
We say $F$ exhibits \emph{row interchangeability} if there exists a matrix $M$ such that for every permutation $\varphi \in \Sym(\{1, \dots{}, r\})$,
for the induced literal permutation $\varphi' \colon \Lit(F)|_{\supp(M)} \to \Lit(F)|_{\supp(M)}$ given by $x_{i,j} \mapsto x_{\varphi(i),j}, \neg x_{i,j} \mapsto \neg x_{\varphi(i),j}$ it holds that $\varphi' \in \Aut(F)|_{\supp(M)}$.
Indeed, if this is the case, we can observe that the matrix $M$ describes a subgroup of $\Aut(F)$ consisting of~$\{\pi\in \Aut(F) \mid \exists\varphi \in \Sym(\{1, \dots{}, r\}): \varphi'=\pi|_{\supp(M)}\}$. We denote this group by~$H_M \leq \Aut(F)$.

A crucial fact is that for $H_M|_{\supp(M)}$, linear-sized complete symmetry breaking is available \cite{DBLP:conf/sat/Devriendt0BD16, DBLP:conf/cp/FlenerFHKMPW02}.
As is also in part discussed in \cite{DBLP:conf/sat/Devriendt0BD16, DBLP:conf/cp/FlenerFHKMPW02}, we observe that the complete symmetry breaking for $H_M$ is most effective whenever $H_M$ is the only action on $\supp(M)$ in $\Aut(F)$, or more precisely, $\Aut(F)|_{\supp(M)} = H_M|_{\supp(M)}$. 
In this case, we call $H_M$ an \emph{elementary} row interchangeability subgroup.
Otherwise, there are non-trivial symmetries $\varphi \in \Aut(F)|_{\supp(M)}$ with $\varphi \not\in H_M|_{\supp(M)}$ or~$\supp(M)$ is not a union of orbits.
Indeed, in this case, the complete symmetry breaking of $H_M$ might make it more difficult to break such overlapping symmetries $\varphi$: for example, if two row interchangeability subgroups $H_M$ and $H_{M'}$ overlap, i.e., $\supp(M) \cap \supp(M') \neq \emptyset$, complete symmetry breaking can only be guaranteed for one of them using the technique of \cite{DBLP:conf/sat/Devriendt0BD16}. 

Whenever $H_M$ is an elementary row interchangeability subgroup, the situation is much clearer: we can produce a linear-sized complete symmetry breaking formula and this covers at least all symmetries on the literals $\supp(M)$. 
In this paper, we therefore focus on computing elementary row interchangeability groups.

Let us consider $F_E$ again: for the matrix
$M \coloneqq \begin{bmatrix}
  x_1 & x_2 & x_3 \\
  y_1 & y_2 & y_3 
\end{bmatrix}$
there is indeed a row interchangeability subgroup.
(Recall that the group $H_M$ permutes positive and negative literals of variables appearing in~$M$.) For this example, $H_M$ is both an elementary row interchangeability group and a factor in the finest direct disjoint decomposition of $\Aut(F_E)$.

\noindent \textbf{Row Interchangeability and Equivalent Orbits.} 
We now describe the matrix of elementary row interchangeability groups in more group-theoretic terms.
We first define the notion of equivalent orbits:
\begin{definition}[Equivalent orbits (see {\cite[Subsection 6.1.2]{seress_2003}})] Two orbits $\Delta_1, \Delta_2$ are equivalent, if and only if there is a bijection $b \colon \Delta_1 \to \Delta_2$ such that for all $\varphi \in \Gamma$ and $\delta \in \Delta_1$, $\varphi(b(\delta)) = b(\varphi(\delta))$.
\end{definition}
We write $\Delta_1 \equiv \Delta_2$ to indicate orbits~$\Delta_1$ and~$\Delta_2$ are equivalent.
It is easy to see this indeed defines an equivalence relation on the orbits \cite{seress_2003}. 

We observe that if a row interchangeability subgroup $H_M$ is elementary, each row of the matrix $M$ is an orbit of $\Aut(F)$.
Since all rows are moved simultaneously in the same way, we remark that rows of $M$ are precisely equivalent orbits with a natural symmetric action:
\begin{lemma} Let $H_M$ be a row interchangeability subgroup of $\Gamma = \Aut(\mathcal{M}(F))$, and let $\Delta_i = (x_1, \dots{}, x_c)$ denote a row of $M$.
  $H_M$ is an elementary row interchangeability subgroup if and only if all of the following hold: (1) $\Delta_i$ is an orbit with a natural symmetric action in $\Gamma$. (2) For every other row $\Delta_j$ of $M$, it holds that $\Delta_i \equiv \Delta_j$. (3) For $\n{\Delta_i} = (\n{x_1}, \dots{}, \n{x_c})$, $\n{\Delta_i}$ is also an orbit with $\n{\Delta_i} \equiv \Delta_i$. 
\label{lem:columninterchangeeq}
\end{lemma}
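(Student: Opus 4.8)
The plan is to prove both implications by exploiting a single structural observation about $H_M$: restricted to $\supp(M)$, the group $H_M$ is by construction exactly the \emph{simultaneous} natural symmetric action that applies one common permutation $\psi \in \Sym(c)$ to the entries of every row $\Delta_i$ and of every negated row $\n{\Delta_i}$ in parallel. In particular each $\Delta_i$ and each $\n{\Delta_i}$ is an $H_M$-orbit on which $H_M$ induces the full symmetric group, any two rows (and a row together with its negation) are equivalent as $H_M$-orbits via the column-aligned bijection $x_{i,k}\mapsto x_{j,k}$ (resp. $x_{i,k}\mapsto \n{x_{i,k}}$). Two facts about $\Gamma=\Aut(\mathcal{M}(F))$ will be used throughout: negation $l\mapsto\n{l}$ is $\Gamma$-equivariant (it is encoded by the literal edges of $\mathcal{M}(F)$), and an equivalence $\Delta\equiv\Delta'$ is precisely a $\Gamma$-set isomorphism between the two orbits.

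For the forward direction, suppose $H_M$ is elementary, i.e.\ $\Gamma|_{\supp(M)}=H_M|_{\supp(M)}$. Then the permutation group induced on $\supp(M)$ by $\Gamma$ literally equals the one induced by $H_M$, hence they have the same orbits and the same induced action on each orbit. Since (1)--(3) hold for $H_M$ by the construction recalled above, they transfer to $\Gamma$: $\Delta_i$ is a $\Gamma$-orbit carrying the natural symmetric action, every other row satisfies $\Delta_i\equiv\Delta_j$, and $\n{\Delta_i}$ is a $\Gamma$-orbit with $\n{\Delta_i}\equiv\Delta_i$. This direction is routine once the equality of induced actions is invoked.

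For the converse, assume (1)--(3). The inclusion $H_M|_{\supp(M)}\subseteq\Gamma|_{\supp(M)}$ is immediate from $H_M\le\Gamma$, so it suffices to show that every $\gamma\in\Gamma$ restricts on $\supp(M)$ to an element of $H_M|_{\supp(M)}$. First I would note that by (1), by (2) (which makes every $\Delta_j$ a $\Gamma$-orbit, being equivalent to the orbit $\Delta_i$), and by (3) together with negation-equivariance (which makes every $\n{\Delta_j}$ a $\Gamma$-orbit), the set $\supp(M)$ is a union of $\Gamma$-orbits; hence $\gamma$ stabilizes $\supp(M)$ and each row setwise. Let $\psi\in\Sym(c)$ be the permutation that $\gamma$ induces on $\Delta_1$ under the natural action of (1). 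Using the $\Gamma$-equivariant bijection $b_j\colon\Delta_1\to\Delta_j$ supplied by (2), the defining relation $\gamma(b_j(\delta))=b_j(\gamma(\delta))$ forces $\gamma$ to act on $\Delta_j$ by the same $\psi$, and negation-equivariance then forces $\psi$ on every $\n{\Delta_j}$. Consequently $\gamma|_{\supp(M)}$ is the parallel action of the single permutation $\psi$, which is exactly an element of $H_M|_{\supp(M)}$.

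The hard part will be the step asserting that the propagating bijection $b_j$ is the \emph{column-aligned} one used to define $H_M$, rather than some other $\Gamma$-equivariant bijection $x_{1,k}\mapsto x_{j,\tau(k)}$; an arbitrary such $\tau$ would make $\gamma$ act on $\Delta_j$ as the conjugate $\tau\psi\tau^{-1}$ instead of $\psi$. The resolution is a uniqueness argument: for the natural symmetric action of $\Sym(c)$ with $c\ge 3$ the center is trivial, so requiring $\tau\psi\tau^{-1}=\psi$ for all realized $\psi$ (all of $\Sym(c)$, by (1)) forces $\tau$ to be the identity and $b_j$ to be column-aligned; for $c\le 2$ the group $\Sym(c)$ is abelian and $\tau\psi\tau^{-1}=\psi$ holds automatically, so the conclusion is unaffected. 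Finally, condition (3) is what guarantees that $\Gamma$ never maps a positive literal of $\supp(M)$ to a negative one, keeping the positive and negative rows as separate orbits, so that $\gamma|_{\supp(M)}$ respects polarity exactly as every element of $H_M$ does. Combining the two inclusions yields $\Gamma|_{\supp(M)}=H_M|_{\supp(M)}$, i.e.\ $H_M$ is elementary.
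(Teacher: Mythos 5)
The paper states this lemma as an observation and provides no proof of its own, so there is nothing to compare route-for-route; judged on its own terms, your argument is sound and, importantly, it isolates the one genuinely non-trivial point. The forward direction is indeed routine once one notes that $\Gamma|_{\supp(M)}=H_M|_{\supp(M)}$ forces every element of $\Gamma$ to preserve $\supp(M)$ setwise, so the $\Gamma$-orbits inside $\supp(M)$ coincide with the $H_M$-orbits (the rows and negated rows) and the induced actions agree. For the converse, your identification of the real difficulty --- that condition (2) only supplies \emph{some} equivariant bijection $b_j$, which a priori could be column-misaligned by a $\tau$ and turn $\psi$ into $\tau\psi\tau^{-1}$ on $\Delta_j$ --- is exactly right, as is the resolution via the trivial center of $\Sym(c)$ for $c\ge 3$ and commutativity for $c\le 2$.

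The one step you should justify more carefully is where the relation $\tau\psi\tau^{-1}=\psi$ for all $\psi\in\Sym(c)$ actually comes from; as written (``requiring \dots by (1)'') it risks sounding circular, since realizing every $\psi$ on $\Delta_1$ alone says nothing about how those elements act on $\Delta_j$. The clean derivation uses the hypothesis that $H_M$ \emph{is} a row interchangeability subgroup, i.e.\ $H_M\le\Gamma$ and its elements act column-aligned on every row by definition: applying the $\Gamma$-equivariance of $b_j$ to these elements yields $\tau\psi=\psi\tau$ for every $\psi\in\Sym(c)$, whence $\tau\in Z(\Sym(c))$. Equivalently, the $\Sym(\Delta_1)$-equivariant bijection $\Delta_1\to\Delta_j$ is unique when $c\ge3$ (two such bijections differ by an element of the centralizer of $\Sym(\Delta_1)$ in itself, which is trivial), and the column-aligned bijection is one such, so $b_j$ must be it. The same uniqueness argument shows the bijection in (3) is the negation map, which is $\Gamma$-equivariant by the definition of a formula automorphism. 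With that one sentence added, the proof is complete.
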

There is an exact algorithm which computes equivalent orbits in essentially quadratic runtime~\cite{seress_2003}. Again, runtimes are difficult to compare due to different pre-conditions in \cite{seress_2003}. 
In any case, the algorithm for equivalent orbits depends on computing a base and strong generating set, which is too slow from our perspective. 

We may split detecting elementary row interchangeability groups into detecting \emph{natural symmetric action} on the orbits, followed by computing \emph{equivalent orbits}.
We now turn to solving the problems defined above in the computational setting of the SAT-symmetry interface.
Specifically, we propose algorithms for the finest disjoint direct decomposition (Section~\ref{sec:decompose}), natural symmetric action (Section~\ref{sec:snaction}), and equivalent orbits (Section~\ref{sec:rowinterchange}).

\section{Finest Disjoint Direct Decomposition} \label{sec:decompose}
Having established the problems we want to address, we now turn to presenting suitable algorithms in the SAT-symmetry computational setting. In particular, recall that we want to make use of joint graph/group pairs in order to state algorithms that run in instance-quasi-linear time. We begin by computing the finest disjoint direct decomposition.

Specifically, given a joint graph/group pair $(G,S)$, our aim is to compute the finest disjoint direct decomposition of the group $\langle S \rangle$.
Our proposed algorithm, given the orbits, can do so in instance-linear time.
The disjoint direct decomposition of a group allows us to separately treat each factor of the decomposition in symmetry exploitation or other consecutive algorithms. 

To simplify the discussion, we assume the graph $G$ to be undirected. However, the procedure generalizes to both directed and even edge-colored graphs.

\textbf{Orbit Graph.}
We describe the \emph{orbit graph}, which can be constructed from $(G,S)$.
We are particularly interested in the connected components of the orbit graph, which turn out to correspond exactly to the factors of the finest disjoint direct decomposition.

First, note that the orbit partition $\pi$ of $\langle S \rangle$ can be viewed as a vertex coloring of the graph $G$, assigning to every vertex its orbit. 
We consider the graph $G' = (V(G), E(G), \pi)$, i.e., $G$ colored with its orbit partition (see Figure~\ref{fig:runningexampledecompose}, left).

We call two distinct orbits $\Delta, \Delta'$ \emph{homogeneously connected}, whenever either all vertices $v \in \Delta$ are adjacent to all vertices of $\Delta'$, or there is no edge with endpoints both in~$\Delta$ and~$\Delta'$. 
Indeed, we could ``flip edges'' between homogeneously connected orbits such that they all become disconnected, without changing the automorphism group (see Figure~\ref{fig:runningexampledecompose}, middle).

We now give the formal definition of the orbit graph.
The orbit graph is essentially an adapted version of the so-called flipped quotient graph (see \cite{DBLP:conf/mfcs/KieferSS15} for a discussion).
The vertex set of the orbit graph is the set of orbits of $\langle S \rangle$, i.e., $\{\pi^{-1}(v) \; | \; v \in V(G')\}$.
Two orbits $\Delta, \Delta'$ are adjacent in the orbit graph if and only if the orbits are \emph{not} homogeneously connected in the original graph $G$ (see Figure~\ref{fig:runningexampledecompose}, right).

\SetKwFor{For}{for (}{)}{}
\begin{algorithm}[t]
  \SetAlgoLined
	\SetAlgoNoEnd
	\caption{Compute the orbit graph.} \label{alg:orbitgraph}
	\Fn{\OrbitGraph{$G'$}}{
		\SetKwInOut{Input}{Input}
		\SetKwInOut{Output}{Output}
		\Input{graph $G' = (V, E, \pi)$, where $\pi$ is the orbit partition of $\Aut(G')$}
		\Output{orbit graph $G_O$}
    initialize integer array $A$ of size $|V|$ with all $0$\;
    initialize empty list $W$\; 
    initialize empty graph $G_O$\;
    $V(G_O) \coloneqq \pi(V)$\;
    \For{$\Delta \in \pi(V)$}{
      pick an arbitrary $v \in \Delta$\;
      \For{$(v, v') \in E$}{
        increment $A[\pi(v')]$\; 
        add $\pi(v')$ to $W$\;
      }
      \For{$\Delta' \in W$}{
        \lIf{$A[\Delta'] > 0$ and $A[\Delta'] < |\pi^{-1}(\Delta')|$\label{alg:line:homogeneoustest}}{
          add edge $(\Delta, \Delta')$ in $G_O$ \label{alg:line:connectedge}
        }
        $A[\Delta']$ := $0$\;
      }
    }
    
		\Return{$G_O$}\;
	}
\end{algorithm}

\emph{Description of Algorithm~\ref{alg:orbitgraph}}. Algorithm~\ref{alg:orbitgraph} describes how to compute the orbit graph from $G'$. The algorithm first initializes the graph $G_O$ with a vertex set that contains exactly one vertex for each orbit of~$G'$.
It then counts for each orbit~$\Delta$, how many neighbors a vertex~$v\in \Delta$ has in the other orbits. Since~$\Delta$ is an orbit, this number is the same for all vertices, so it suffices to compute this for one~$v\in \Delta$.
Finally, the algorithm checks to which other colors the vertex~$v$ and thus the orbit~$\Delta$ is \emph{not} homogeneously connected (Line~\ref{alg:line:homogeneoustest}).
If~$\Delta$ and~$\Delta'$ are not homogeneously connected, the edge~$(\Delta,\Delta')$ is added to the orbit graph.  

\emph{Remark on the runtime of Algorithm~\ref{alg:orbitgraph}.} Using appropriate data structures for graphs (adjacency lists) and colorings (see \cite{DBLP:journals/jsc/McKayP14}, which in particular includes efficient ways to compute $|\pi^{-1}(C')|$), the algorithm can be implemented in instance-linear time.

\pgfdeclarelayer{background}
\pgfdeclarelayer{foreground}
\pgfsetlayers{background,main,foreground}

\begin{figure}[t]
	\centering
  \scalebox{0.85}{
	\begin{tikzpicture}[every node/.style = {draw, circle, fill=black, radius=0.15,minimum size=5.5mm, inner sep=0},yscale=0.5,xscale=0.4]
	\draw (0,0) node [fill=cyan!50] (x1) {\footnotesize $x_1$};
	\draw (0,1.5) node[fill=blue!50] (nx1) {\footnotesize $\n{x_1}$};
	\draw[ultra thick,-] (x1) -- (nx1);
	
	\draw (3,0)  node[fill=cyan!50] (x2) {\footnotesize $x_2$};
	\draw (3,1.5)  node[fill=blue!50] (nx2) {\footnotesize $\n{x_2}$};
	\draw[ultra thick,-] (x2) -- (nx2);
	
	\draw (6,0)  node[fill=cyan!50] (x3) {\footnotesize $x_3$};
	\draw (6,1.5)  node[fill=blue!50] (nx3) {\footnotesize $\n{x_3}$};
	\draw[ultra thick,-] (x3) -- (nx3);
	
	\draw (-4.5+0+1.5,-5.5) node[fill=purple!50] (y1) {\footnotesize $y_1$};
	\draw (-4.5+1.5,-4) node[fill=orange!50] (ny1) {\footnotesize $\n{y_1}$};
	\draw[ultra thick,-] (y1) -- (ny1);
	
	\draw (-4.5+3+1.5,-5.5)  node[fill=purple!50] (y2) {\footnotesize $y_2$};
	\draw (-4.5+4.5,-4)  node[fill=orange!50] (ny2) {\footnotesize $\n{y_2}$};
	\draw[ultra thick,-] (y2) -- (ny2);
	
  \draw (-4.5+6+1.5,-5.5)  node[fill=purple!50] (y3) {\footnotesize $y_3$};
	\draw (-4.5+7.5,-4)  node[fill=orange!50] (ny3) {\footnotesize $\n{y_3}$};
	\draw[ultra thick,-] (y3) -- (ny3);

  \draw (-4.5+6  +1.5*3,-4)  node[fill=yellow!50] (z1) {\footnotesize $z_1$};
	\draw (-4.5+6+1.5*3,-5.5)  node[fill=green!50] (nz1) {\footnotesize $\n{z_1}$};
	\draw[ultra thick,-] (z1) -- (nz1);

  \draw (-4.5+6  +1.5*5,-4)  node[fill=yellow!50] (z2) {\footnotesize $z_2$};
	\draw (-4.5+6+1.5*5,-5.5)  node[fill=green!50] (nz2) {\footnotesize $\n{z_2}$};
	\draw[ultra thick,-] (z2) -- (nz2);

	\draw[] (12/4 * 0-1,-2)  node[fill=black,minimum size=4.5mm] (c1) {\color{white}$c_1$};
	\draw (12/4 * 1-1,-2)  node[fill=black,minimum size=4.5mm] (c2) {\color{white}$c_2$};
	\draw (12/4 * 2-1,-2)  node[fill=black,minimum size=4.5mm] (c3) {\color{white}$c_3$};
	\draw (12/4 * 3-1,-2)  node[fill=gray,minimum size=4.5mm] (c4) {\color{white}$c_4$};

  \draw[ultra thick,-] (c1) -- (x1.south);
  \draw[ultra thick,-] (c2) -- (x2.south);
  \draw[ultra thick,-] (c3) -- (x3.south);
  \draw[ultra thick,-] (c1) -- (ny1);
  \draw[ultra thick,-] (c2) -- (ny2);
  \draw[ultra thick,-] (c3) -- (ny3);

  \draw[ultra thick,-] (c4) -- (x1.south);
  \draw[ultra thick,-] (c4) -- (x2.south);
  \draw[ultra thick,-] (c4) -- (x3.south);
  \draw[ultra thick,-] (c4) -- (z1);
  \draw[ultra thick,-] (c4) -- (z2);
	\end{tikzpicture}
  }
  \hspace{1cm}
  \scalebox{0.85}{
	\begin{tikzpicture}[every node/.style = {draw, circle, fill=black, radius=0.15,minimum size=5.5mm, inner sep=0},yscale=0.5,xscale=0.4]
	\draw (0,0) node [fill=cyan!50] (x1) {\footnotesize $x_1$};
	\draw (0,1.5) node[fill=blue!50] (nx1) {\footnotesize $\n{x_1}$};
	\draw[ultra thick,-] (x1) -- (nx1);
	
	\draw (3,0)  node[fill=cyan!50] (x2) {\footnotesize $x_2$};
	\draw (3,1.5)  node[fill=blue!50] (nx2) {\footnotesize $\n{x_2}$};
	\draw[ultra thick,-] (x2) -- (nx2);
	
	\draw (6,0)  node[fill=cyan!50] (x3) {\footnotesize $x_3$};
	\draw (6,1.5)  node[fill=blue!50] (nx3) {\footnotesize $\n{x_3}$};
	\draw[ultra thick,-] (x3) -- (nx3);
	
	\draw (-4.5+0+1.5,-5.5) node[fill=purple!50] (y1) {\footnotesize $y_1$};
	\draw (-4.5+1.5,-4) node[fill=orange!50] (ny1) {\footnotesize $\n{y_1}$};
	\draw[ultra thick,-] (y1) -- (ny1);
	
	\draw (-4.5+3+1.5,-5.5)  node[fill=purple!50] (y2) {\footnotesize $y_2$};
	\draw (-4.5+4.5,-4)  node[fill=orange!50] (ny2) {\footnotesize $\n{y_2}$};
	\draw[ultra thick,-] (y2) -- (ny2);
	
  \draw (-4.5+6+1.5,-5.5)  node[fill=purple!50] (y3) {\footnotesize $y_3$};
	\draw (-4.5+7.5,-4)  node[fill=orange!50] (ny3) {\footnotesize $\n{y_3}$};
	\draw[ultra thick,-] (y3) -- (ny3);

  \draw (-4.5+6  +1.5*3,-4)  node[fill=yellow!50] (z1) {\footnotesize $z_1$};
	\draw (-4.5+6+1.5*3,-5.5)  node[fill=green!50] (nz1) {\footnotesize $\n{z_1}$};
	\draw[ultra thick,-] (z1) -- (nz1);

  \draw (-4.5+6  +1.5*5,-4)  node[fill=yellow!50] (z2) {\footnotesize $z_2$};
	\draw (-4.5+6+1.5*5,-5.5)  node[fill=green!50] (nz2) {\footnotesize $\n{z_2}$};
	\draw[ultra thick,-] (z2) -- (nz2);

	\draw[] (12/4 * 0-1,-2)  node[fill=black,minimum size=4.5mm] (c1) {\color{white}$c_1$};
	\draw (12/4 * 1-1,-2)  node[fill=black,minimum size=4.5mm] (c2) {\color{white}$c_2$};
	\draw (12/4 * 2-1,-2)  node[fill=black,minimum size=4.5mm] (c3) {\color{white}$c_3$};
	\draw (12/4 * 3-1,-2)  node[fill=gray,minimum size=4.5mm] (c4) {\color{white}$c_4$};

  \draw[ultra thick,-] (c1) -- (x1.south);
  \draw[ultra thick,-] (c2) -- (x2.south);
  \draw[ultra thick,-] (c3) -- (x3.south);
  \draw[ultra thick,-] (c1) -- (ny1);
  \draw[ultra thick,-] (c2) -- (ny2);
  \draw[ultra thick,-] (c3) -- (ny3);
	\end{tikzpicture}
  }
  \hspace{1cm}
  \scalebox{0.85}{
	\begin{tikzpicture}[every node/.style = {draw, rectangle, fill=black, radius=0.15,minimum size=5.5mm, inner sep=0},yscale=0.5,xscale=0.4]
	\draw (0,0) node [fill=cyan!50] (x1) {\footnotesize $\{x_1,x_2,x_3\}$};
	\draw (0,1.5) node[fill=blue!50] (nx1) {\footnotesize $\{\n{x_1},\n{x_2},\n{x_3}\}$};
	\draw[ultra thick,-] (x1) -- (nx1);
	
	\draw (0,-5.5) node[fill=purple!50] (y1) {\footnotesize $\{y_1,y_2,y_3\}$};
	\draw (0,-4) node[fill=orange!50] (ny1) {\footnotesize $\{\n{y_1},\n{y_2},\n{y_3}\}$};
	\draw[ultra thick,-] (y1) -- (ny1);

  \draw (-4.5+6  +1.5*3,-4)  node[fill=yellow!50] (z1) {\footnotesize $\{z_1,z_2\}$};
	\draw (-4.5+6+1.5*3,-5.5)  node[fill=green!50] (nz1) {\footnotesize $\{\n{z_1},\n{z_2}\}$};
	\draw[ultra thick,-] (z1) -- (nz1);

	\draw[] (0,-2)  node[fill=black] (c1) {\color{white}$\{c_1,c_2,c_3\}$};
	\draw (-4.5+6+1.5*3,-2)  node[fill=gray] (c4) {\color{white}$\{c_4\}$};

  \draw[ultra thick,-] (c1) -- (x1);
  \draw[ultra thick,-] (c1) -- (ny1);
	\end{tikzpicture}
  }
	\caption{Model graph of $F_E$ colored with its orbit partition on the left. The corresponding graph with flipped edges is in the middle, which disconnects parts of the graph. On the right the orbit graph is shown, whose $3$ connected components correspond to the factors of the finest disjoint direct decomposition.}
	\label{fig:runningexampledecompose}
\end{figure}
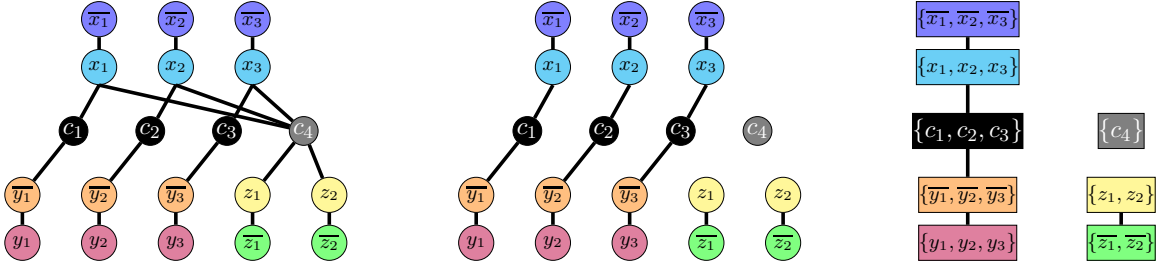

\textbf{Orbit Graph to Decomposition.}
Indeed, the connected components of the orbit graph represent precisely the factors of the finest disjoint direct decomposition of the automorphism group of the graph: 
\begin{restatable}{lemma}{decompositionlemma} Let $\Gamma = \Aut(G)$. The vertices represented by a connected component of the orbit graph of $G$ are all in the same factor of the finest direct disjoint decomposition of $\Gamma$ and vice versa.
\end{restatable}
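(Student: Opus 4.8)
The plan is to establish a two-way correspondence between connected components of the orbit graph and factors of the finest decomposition, resting on a single structural observation: two orbits lying in different factors of \emph{any} disjoint direct decomposition are necessarily homogeneously connected, and hence non-adjacent in the orbit graph. The two claims of the lemma then fall out by combining this observation with the fact that the components themselves induce a valid decomposition.

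First I would prove the observation. Fix a disjoint direct decomposition $\Gamma = A \times B$ and let $\Delta_a \subseteq \supp(A)$, $\Delta_b \subseteq \supp(B)$ be orbits; here I use that every non-singleton orbit lies entirely in exactly one factor's support, since each support is invariant under $\Gamma$ and fixed pointwise by the complementary factor. Suppose some $a \in \Delta_a$ is adjacent to some $b \in \Delta_b$. For arbitrary $a' \in \Delta_a$, $b' \in \Delta_b$ pick $\alpha \in A$ with $\alpha(a) = a'$ and $\beta \in B$ with $\beta(b) = b'$; since $A$ fixes $\supp(B)$ and $B$ fixes $\supp(A)$ pointwise, the automorphism $\alpha\beta \in \Gamma$ sends $a \mapsto a'$ and $b \mapsto b'$, so $(a',b')$ is again an edge. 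Thus either no edge or all edges run between $\Delta_a$ and $\Delta_b$, i.e. they are homogeneously connected. Applying this to the finest decomposition shows every orbit-graph edge stays inside a single factor, so each connected component is contained in one factor. This is the forward direction.

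For the converse I would show that the partition of the non-fixed vertices by connected components itself induces a valid disjoint direct decomposition. Given $\varphi \in \Gamma$ and a component with vertex set $U$ (a union of orbits, hence $\varphi$-invariant), define $\varphi_U$ to equal $\varphi$ on $U$ and the identity elsewhere. The only edges whose image under $\varphi_U$ needs checking are those with exactly one endpoint $u \in U$: its partner $w$ lies in an orbit of a different component, so (non-adjacent orbits being homogeneously connected by definition of the orbit graph) the orbit of $u$ and the orbit of $w$ are homogeneously connected; as $(u,w)$ is an edge, all their vertices are mutually adjacent, and in particular $(\varphi(u), w)$ is an edge. Coloring is preserved since $\varphi_U$ moves vertices only within their orbits. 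Hence $\varphi_U \in \Gamma$, and writing $\Gamma_i = \{\psi \in \Gamma : \supp(\psi) \subseteq U_i\}$ over the components $U_i$ yields $\varphi = \prod_i \varphi_{U_i}$ with pairwise disjoint supports, so $\Gamma = \prod_i \Gamma_i$ is a disjoint direct decomposition.

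Finally I would combine the two. The forward direction places each component inside a single finest-factor, while the decomposition just constructed shows that if a finest-factor contained two or more components it would split as the direct product of the corresponding $\Gamma_i$, contradicting finest. Hence finest-factors and connected components coincide, with isolated singleton orbits (global fixed points) corresponding to empty support and carrying no factor. I expect the main obstacle to be the verification in the converse that $\varphi_U$ is an automorphism, specifically the cross-component edges, which is precisely where the definition of the orbit graph via homogeneous connection is used; the remaining bookkeeping (support-invariance of factors, preservation of colors, and the fixed-point degeneracy) is routine.
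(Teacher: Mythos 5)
Your proof is correct, and it takes a genuinely different route at the key step. For the forward direction (orbits in different factors are homogeneously connected), the paper argues by contradiction via point stabilizers: using equitability of the orbit coloring, a non-homogeneous connection is biregular with $0 < d_1 < |\Delta_2|$, so fixing a point $\delta \in \Delta_1$ splits $\Delta_2$ into its neighbors and non-neighbors, contradicting that $\Delta_2$ must remain an orbit of $\Gamma_{(\delta)}$ when the orbits lie in different factors. You instead argue directly: an edge $(a,b)$ between orbits in the supports of complementary factors $A$ and $B$ can be carried to any pair $(a',b')$ by an element $\alpha\beta$ with $\alpha \in A$, $\beta \in B$ acting independently (using that a $\Gamma$-orbit inside $\supp(A)$ is already an $A$-orbit because $B$ fixes $\supp(A)$ pointwise). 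This is more elementary and avoids equitable colorings entirely. More significantly, for the converse the paper only asserts in one sentence that orbits in different components land in different factors of the finest decomposition, whereas you actually prove it by constructing the component-wise decomposition: restricting any $\varphi \in \Gamma$ to a component $U$ and extending by the identity yields an automorphism, with the only nontrivial check being cross-component edges, which is exactly where homogeneous connectivity (non-adjacency in the orbit graph) is consumed. This construction --- which exploits having the graph $G$ in hand to certify $\varphi_U \in \Aut(G)$, very much in the spirit of the paper's joint graph/group pairs --- makes your write-up more complete than the paper's own proof, and your handling of singleton orbits as isolated vertices with trivial factors is a detail the paper glosses over.
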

\begin{proof}
  Consider two orbits $\Delta_1, \Delta_2$ of $\Gamma$ in different factors of any direct disjoint decomposition. 
  Towards a contradiction, assume $\Delta_1, \Delta_2$ are not homogeneously connected.
  Note that naturally, the orbit coloring is equitable.
  Since the orbit coloring is equitable, the connection must be regular, i.e., each vertex of $\Delta_1$ has $d_1$ neighbors in $\Delta_2$, and every vertex of $\Delta_2$ has $d_2$ neighbors in $\Delta_1$ for some integers~$d_1,d_2$.
  However, $0 < d_1 < |\Delta_2|$ and $0 < d_2 < |\Delta_1|$ hold.
  
  Let us now fix a point $\delta \in \Delta_1$, i.e., consider the point stabilizer $\Gamma_{(\delta)}$. If two orbits are in different factors of a direct disjoint decomposition, fixing a point of $\Delta_1$ must not change the group action on $\Delta_2$.
  In particular, $\Delta_2$ must be an orbit of $\Gamma_{(\delta)}$.
  However, $\delta$ is adjacent to some vertex~$\delta'\in \Delta_2$ and non-adjacent to some vertex~$\delta''\in \Delta_2$ (see Figure~\ref{fig:orbithomogeneous} for an illustration).
  Having fixed $\delta$, we can therefore not map~$d'$ to~$d''$. This contradicts the assumption that $\Delta_1$ and $\Delta_2$ are in different factors of a direct disjoint decomposition.
  Hence, orbits in different factors of a direct disjoint decomposition must be homogeneously connected in $G$, i.e., non-adjacent in the orbit graph.

  Now assume $\Delta_1$ and $\Delta_k$ are in the same component in the orbit graph.
  Then, there must be a path of orbits $\Delta_1, \Delta_2, \dots{}, \Delta_k$ where each $\Delta_i, \Delta_{i+1}$ is \emph{not} homogeneously connected.
  In this case, we know for each~$i\in \{1,\ldots,k-1\}$ that $\Delta_i$ and $\Delta_{i+1}$ must be in the same factor of any disjoint direct decomposition.
  Therefore, $\Delta_1$ and $\Delta_k$ must be in the same factor of every disjoint direct decomposition.

  On the other hand, if $\Delta_1$ and $\Delta_k$ are in different components in the orbit graph, they are in different factors of the finest disjoint direct decomposition.
\end{proof}
\begin{figure}
  \centering
  \scalebox{0.9}{\begin{tikzpicture}[scale=1.5,yscale=0.8]

  \foreach \i in {0,...,4}{
    \node[draw, fill=cyan!50, circle] at (0.5*\i, 0.5)  (a\i)  {};
  }

  \foreach \i in {0,...,4}{
    \node[draw, fill=purple!50, circle] at (0.5*\i, -0.5)  (b\i)  {};
  }

  \begin{pgfonlayer}{background}
  \foreach \i in {0,...,4}{
    \pgfmathtruncatemacro{\il}{mod(\i+1,5)}
    \draw[ultra thick,] ($ (a\i.south) + (0,0.01) $)-- ($ (b\il.north) - (0,0.025) $);
    \draw[ultra thick] ($ (a\i.south) + (0,0.01) $) -- ($ (b\i.north) - (0,0.025) $);
  }
\end{pgfonlayer}
\end{tikzpicture}}
\hspace{2cm}
\scalebox{0.9}{\begin{tikzpicture}[scale=1.5,yscale=0.8]
  
  \node[draw, fill=red!50, circle] at (0.5*0, 0.5)  (a0)  {};
   \foreach \i in {1,...,4}{
     \node[draw, fill=cyan!50, circle] at (0.5*\i, 0.5)  (a\i)  {};
   }
 
   \node[draw, fill=green!50, circle] at (0.5*0, -0.5)  (b0)  {};
   \node[draw, fill=green!50, circle] at (0.5*1, -0.5)  (b1)  {};
   \foreach \i in {2,...,4}{
     \node[draw, fill=orange!50, circle] at (0.5*\i, -0.5)  (b\i)  {};
   }
 
   \begin{pgfonlayer}{background}
   \foreach \i in {0,...,4}{
     \pgfmathtruncatemacro{\il}{mod(\i+1,5)}
     \ifthenelse{\i=0}{
     \draw[ultra thick,draw=red!50] ($ (a\i.south) + (0,0.01) $) -- ($ (b\il.north) - (0,0.025) $);
     \draw[ultra thick,draw=red!50] ($ (a\i.south) + (0,0.01) $)-- ($ (b\i.north) - (0,0.025) $);
     }{
      \draw[ultra thick,] ($ (a\i.south) + (0,0.01) $)-- ($ (b\il.north) - (0,0.025) $);
      \draw[ultra thick] ($ (a\i.south) + (0,0.01) $) -- ($ (b\i.north) - (0,0.025) $);
     }
   }
   \end{pgfonlayer}
 \end{tikzpicture}}
  \caption{Illustration of two orbits that are non-homogeneously connected on the left in blue and purple. On the right, fixing a vertex of one orbit indicated in red immediately partitions the other orbit into two orbits: the neighbors of the red vertex in green, and the non-neighbors in orange.} \label{fig:orbithomogeneous}
\end{figure}
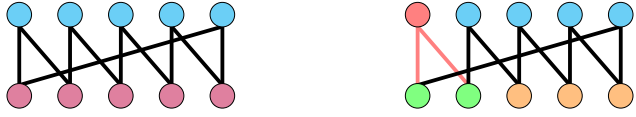
Since connected components can be computed in linear time in the size of a graph, and the size of the orbit graph is at most linear in the size of the original graph, we can therefore compute the finest direct disjoint decomposition in instance-linear time.
In a consecutive step, the generators could be split according to factors, producing a separable generating set, again in instance-linear time. This is done by separating each generator into the different factors.
 Finally,
 given the finest direct disjoint decomposition, we can again produce a joint graph/group pair for each factor, by outputting for a factor $H_i$ the induced subgraph $G'[H_i]$.  We summarize the above in a theorem:
\begin{theorem} Given a joint graph/group pair $(G, S)$ and orbit partition of $\langle S \rangle$, there is an instance-linear algorithm which computes the following:
  \begin{enumerate}
    \item The finest disjoint direct decomposition $\langle S \rangle = H_1 \times H_2 \times \dots\times H_m$.
    \item A separable generating set $S'$ with $\langle S' \rangle = G$. 
    \item For all factors $i \in \{1, \dots{}, m\}$ a joint graph/group pair $(G_i, S_i)$ with $\Aut(G_i) = \langle S_i \rangle = H_i$ in instance-linear time.
  \end{enumerate}
\end{theorem}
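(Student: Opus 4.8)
The plan is to build everything on the orbit graph from Algorithm~\ref{alg:orbitgraph} together with the decomposition lemma, followed by three bookkeeping passes, each linear in the instance size. First I would compute the orbit graph $G_O$ from the orbit-colored graph $G'$ in instance-linear time (as in the remark on Algorithm~\ref{alg:orbitgraph}), and then run a standard linear-time connected-components routine on $G_O$. Since $G_O$ has at most $|V(G)|$ vertices and at most $|E(G)|$ edges, this stays instance-linear. By the decomposition lemma, the vertex sets of these components, pulled back to $V(G)$ through the orbit partition, are exactly the supports $\supp(H_1), \dots, \supp(H_m)$ of the factors of the finest disjoint direct decomposition. Recording for each orbit the index of its component yields a lookup table $t$ from orbits to factor indices, built in linear time. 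This settles item~(1).

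For item~(2), I would exploit that $\langle S \rangle = H_1 \times \cdots \times H_m$ is a direct product with pairwise disjoint supports, so that every $\varphi \in \langle S \rangle$ factors uniquely as $\varphi = \psi_1 \cdots \psi_m$ with $\psi_i \in H_i$, where $\psi_i$ is simply $\varphi$ restricted to $\supp(H_i)$ and extended by the identity elsewhere. Concretely, for each generator $\varphi \in S$ I would scan $\supp(\varphi)$ once and, using $t$, distribute each moved point into the bucket of its factor; each nonempty bucket emits one permutation $\psi_i$ into $S'$. As every point is touched a constant number of times, this pass runs in $\mathcal{O}(\enc(S))$, and by construction every element of $S'$ has support inside a single $\supp(H_i)$, so $S'$ is separable and $\langle S' \rangle = \langle S \rangle$.

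For item~(3), for each factor I would output $G_i \coloneqq G'[\supp(H_i)]$, the induced orbit-colored subgraph, together with $S_i$, the elements of $S'$ supported on $\supp(H_i)$. Partitioning the vertices and scanning the edge list once to route each edge into its component's subgraph costs $\mathcal{O}(|V| + |E|)$, so the construction is instance-linear. The inclusion $\langle S_i \rangle = H_i|_{\supp(H_i)} \subseteq \Aut(G_i)$ is immediate, since $H_i$ preserves $G'$ and its orbit coloring. The reverse inclusion $\Aut(G_i) \subseteq H_i|_{\supp(H_i)}$ is the main obstacle, as one must rule out spurious automorphisms arising from deleting the rest of the graph.

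To close that gap, I would take any $\psi \in \Aut(G_i)$; since $\psi$ respects the orbit coloring and each orbit $\Delta \subseteq \supp(H_i)$ is a single color class, $\psi$ fixes every such $\Delta$ setwise. Extend $\psi$ by the identity on $V(G) \setminus \supp(H_i)$ to obtain $\hat\psi$. Edges within $\supp(H_i)$ are preserved since $\psi \in \Aut(G_i)$, and edges within the complement are fixed pointwise. For a cross edge between an orbit $\Delta \subseteq \supp(H_i)$ and an orbit $\Delta'$ in the complement, the decomposition lemma guarantees $\Delta$ and $\Delta'$ lie in different components of $G_O$ and are therefore homogeneously connected; since $\hat\psi$ merely permutes within $\Delta$ and fixes $\Delta'$ pointwise, it maps the all-or-nothing bipartite connection between $\Delta$ and $\Delta'$ to itself. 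Hence $\hat\psi \in \Aut(G) = \langle S \rangle$, and because $\hat\psi$ is the identity outside $\supp(H_i)$ and the decomposition has disjoint supports, the unique factorization forces $\hat\psi \in H_i$. Thus $\psi = \hat\psi|_{\supp(H_i)} \in H_i|_{\supp(H_i)}$, giving $\Aut(G_i) = H_i|_{\supp(H_i)} = \langle S_i \rangle$ and completing each joint graph/group pair within the claimed instance-linear budget.
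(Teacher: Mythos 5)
Your proposal follows essentially the same route as the paper: orbit graph plus connected components via the decomposition lemma for item (1), splitting each generator into its per-factor components for item (2), and the induced orbit-colored subgraph $G'[\supp(H_i)]$ for item (3). The only difference is that you spell out the argument that $\Aut(G_i)$ contains no spurious automorphisms (extending by the identity and using homogeneous connectivity of cross-orbit edges), a verification the paper leaves implicit; your version of that step is correct.
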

We recall that if the orbit partition of $\langle S \rangle$ is not yet available, we can compute it in instance-quasi-linear time.

\textbf{Domain Reduction to SAT Literals.}
For a SAT formula $F$, we can apply the above procedure to its model graph $\mathcal{M}(F)$.
However, as mentioned above, in SAT we are typically only interested in symmetries for a subset of vertices, namely the vertices that represent literals.
Therefore, we are specifically interested in the finest direct disjoint decomposition of the automorphism group reduced to literal vertices $\Aut(\mathcal{M}(F))|_{\Lit(F)}$.
The crucial point here is that when removing orbits that represent clauses, orbits of literal vertices can become independent and the disjoint direct decomposition can therefore become finer.
We cannot simply apply our algorithm for the induced group $\Aut(\mathcal{M}(F))|_{\Lit(F)}$ since this is not a joint graph/group pair. Of course we could apply the algorithm from \cite{DBLP:journals/jsc/ChangJ22} that computes finest disjoint direct decomposition for permutation groups in general.
However, we can detect \emph{some} forms of independence by simple means using the original joint graph/group pair. Indeed, we will describe an algorithm that checks in instance-linear time whether the parts in a given partition of the literals are independent. We can thus at least check whether a given partition induces a disjoint direct decomposition: 

\begin{restatable}{theorem}{satdecomposition}
Let~$F$ be a CNF-Formula and~$(\mathcal{M}(F),S)$ be a  joint graph/group pair for the model graph of~$F$. Given a partition~$\Lit(F) = L_1 \cup L_2 \cup\cdots \cup L_t$ of the literals of~$F$, the pair~$(\mathcal{M}(F),S)$, and its orbits, we can check in instance-linear time whether the partition induces a disjoint direct product (that is, whether~$\Aut(F) = \Aut(F)_{(L\setminus L_1)} \times \cdots \times \Aut(F)_{(L\setminus L_t)}$). 
\end{restatable}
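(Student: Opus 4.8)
The plan is to reduce the global statement about the direct product to a condition verifiable by a single sweep over the model graph and the formula. The starting point is the elementary observation that, since the factors $\Aut(F)_{(L\setminus L_i)}$ have pairwise disjoint supports contained in the respective $L_i$, the product $\prod_i \Aut(F)_{(L\setminus L_i)}$ is automatically a direct product and a subgroup of $\Aut(F)$; hence equality holds if and only if every $\varphi\in\Aut(F)$ factors as $\varphi=\varphi_1\cdots\varphi_t$ with $\varphi_i\in\Aut(F)_{(L\setminus L_i)}$. Because the supports are disjoint, such a factorization is unique and forces $\varphi_i$ to be the \emph{slice} of $\varphi$ on $L_i$ (acting as $\varphi$ on $L_i$ and as the identity elsewhere). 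I would therefore first prove the reduction: the partition induces the decomposition if and only if for every generator $\varphi\in S$ (restricted to $\Lit(F)$) and every block $L_i$, the slice $\varphi_i$ again lies in $\Aut(F)$. One direction is immediate; for the converse I use that the slices of the generators lie in the groups $\Aut(F)_{(L\setminus L_i)}$ and that these slices generate a subgroup containing every generator, hence all of $\Aut(F)$.

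Next I would peel off the cheap necessary conditions checkable in instance-linear time directly from the orbits. Every $\varphi\in\Aut(F)$ preserves each $L_i$ setwise exactly when each $L_i$ is a union of orbits; this is one pass over the orbit partition. A slice $\varphi_i$ can be a literal permutation only if it respects negation, forcing each block to be closed under $l\mapsto\n{l}$ on its moved points; this is again one pass over $\bigcup_{\varphi\in S}\supp(\varphi)$. With these in place, each slice $\varphi_i$ is well defined, and the only remaining question is whether $\varphi_i(F)\equiv F$. Here I separate two regimes according to the clauses. Call a clause \emph{split} if it contains moved literals (points of $\supp(\Aut(F))$) belonging to at least two different blocks. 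If no clause is split, then for every clause $C$ all moved literals lie in a single block $L_j$, so $\varphi_i(C)$ equals $\varphi(C)$ when $j=i$ and equals $C$ otherwise; in both cases $\varphi_i(C)\in F$ since $\varphi\in\Aut(F)$, so all slices are automorphisms. Detecting whether a split clause exists is a single scan over all literal occurrences, costing $\mathcal{O}(|F|)$, and in the unsplit case the algorithm accepts.

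The main obstacle is the split case, where a partition can still induce a decomposition even though some clause mixes blocks: the product structure of the formula may let each slice permute the split clauses among themselves without leaving $F$. Consequently the orbit graph alone does not settle the question, since two literal orbits may be joined only through a shared clause orbit whose incidence is a ``product'' (separable) in one formula and ``diagonal'' (rigid) in another, while the orbit-level picture is identical. To stay exact, the verification must inspect the incidence between each clause orbit and the literal orbits it touches and check that, block by block, this incidence factors as a product; equivalently, that moving the $L_i$-part of a split clause while freezing the other blocks always lands on another clause of $F$. The real difficulty is doing this in instance-linear time: a direct test of each generator's slice against all incident clauses costs $\sum_{\varphi\in S}\sum_{l\in\supp(\varphi)}\deg(l)$, a product of formula size and generator count in the worst case, hence superlinear. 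I would resolve this by charging the work to clause orbits rather than generators, processing each split clause and each literal occurrence a bounded number of times in one global pass, using the equitability of the orbit coloring (so that a single per-orbit regularity test serves all its clauses at once) together with the instance-linear membership test $\varphi_i(\mathcal{M}(F))=\mathcal{M}(F)$ available for joint graph/group pairs. The crux, and the step I expect to be most delicate, is proving that this amortized pass is simultaneously sound and complete for the split case while touching each edge of $\mathcal{M}(F)$ only $\mathcal{O}(1)$ times.
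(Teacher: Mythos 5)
Your setup is sound: the reduction to ``every generator's slice on every block is again an automorphism'' is a correct characterization of the decomposition, the preliminary checks (each $L_i$ a union of orbits, closed under negation on moved points) match what is needed, and your analysis of the unsplit case is fine. You also correctly locate the crux in the split case --- the orbit graph cannot distinguish a ``product'' incidence from a ``diagonal'' one, and testing each slice against the formula directly is superlinear. But at exactly that point the proposal stops being a proof: the claimed ``amortized pass'' that is ``sound and complete for the split case while touching each edge only $\mathcal{O}(1)$ times'' is never specified, and you yourself flag it as the delicate unproven step. As written, there is no algorithm and no correctness argument for the only hard case, so the theorem is not established.

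The paper closes this gap with a different and quite concrete idea that does not test slices at all. It localizes to a single clause orbit $\Delta$ (clause vertices are pairwise nonadjacent, so orbits of clauses can be treated independently), and for each block $L_i$ defines the \emph{joint occurrences} of $L_i$: the distinct sets $C\cap L_i$ over $C\in\Delta$. By transitivity these all have the same size; let $n_i$ be their number. The key claim is that the partition induces a disjoint direct product on $\Delta$ if and only if every combination of joint occurrences $(D_1,\dots,D_t)$ is realized by some clause of $\Delta$, which --- since each clause realizes exactly one combination --- is equivalent to the single counting identity $|\Delta| = \prod_{i=1}^t n_i$. The forward direction uses the direct product to combine the permutations $\varphi_i$ carrying a reference clause's occurrences $D_i'$ to arbitrary targets $D_i$ into one automorphism, hence one witnessing clause; the converse extends each projected permutation while fixing the other blocks. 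Counting the $n_i$ and checking $|\Delta|=\prod n_i$ is a single scan over the literal occurrences of the orbit, giving instance-linearity. This is the missing ingredient: your criterion (``moving the $L_i$-part of a split clause while freezing the other blocks always lands on another clause'') is essentially the right condition, but the step you lack is recognizing that it collapses to a product-of-counts test per clause orbit rather than requiring any per-generator or per-clause membership verification.
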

\begin{proof}
  First, we check whether each~$L_i$ is a union of literal orbits. Otherwise, we do not have a disjoint direct product.
  
  We now argue that we can treat each clause orbit independently. Indeed, in the construction of the model graph clause vertices are never adjacent to other clause vertices. Moreover, literal vertices can only be adjacent to their negation or clauses. Thus $\Aut(F) = \Aut(F)_{(L\setminus L_1)} \times \cdots \times \Aut(F)_{(L\setminus L_t)}$ exactly if for every clause orbit $\Delta$ this decomposition is a disjoint direct product when we remove all clauses not in~$\Delta$.
  
  For each clause orbit $\Delta$ we will check this in time linear in $|C|\cdot |\Delta|$, where~$|\Delta|$ denotes the total number of clauses of the orbit and~$|C|$ is the number of literals in each clause. Thus, overall we will get an instance-linear time algorithm.
  
  We therefore assume from now on that $\Delta$ is the only clause orbit. We let $K = |\Delta|$ denote the total number of clauses of the orbit.
  We define for each part~$L_i$ in the given partition of literals the following:
  whenever there is a $C \in \Delta$ with $C \cap L_i = D$, we call $D$ a \emph{joint occurrence} of $L_i$.
  We denote by $n_i$ the number of joint occurrences of $L_i$.
  
  Note that all joint occurrences of~$L_i$ have the same size, since the symmetry group acts transitively on them.
  
  Also note that we can assume that literals appear together with their negation in the same part~$L_i$. Indeed, to obtain a disjoint direct product, they could only appear in different parts if they are fixed by all elements of~$\Aut(F)$, in which case we can put them into a new part containing only the two literals.
  
  We claim now that the partition~$L_1 \cup L_2 \cup\cdots \cup L_t$ induces a disjoint direct product (when it comes to $\Delta$) exactly if
  for all combined choices of occurrences~$D_i$ of~$L_i$ for each~$i\in \{1,\ldots,t\}$ there is a clause containing all~$D_i$ simultaneously. In other words, exactly all $N = \prod_{i=1}^t n_i$ combinations of joint occurrences are encoded in $\Delta$. 
  Note that this can be checked in instance-linear time by simply checking $K = N$.
  
  It remains to argue the claim. When $K = N$, every permutation in the projection~$\Aut(F)_{(L\setminus L_i)}$ extends in~$\Aut(\mathcal{M}(F))$ to~$L_i$ and to~$\Delta$. In fact, it is even possible to extend the maps by fixing all points of~$L_i$. Conversely, suppose~$D_1,\ldots,D_t$ are occurrences with~$D_i\in L_i$. 
  Choose some clause~$C$ in~$\Delta$. Let~$D_1', D'_2,\ldots,D'_t$ be the occurrences in~$C$, with~$D_i'\in L_i$. Since~$\Delta$ is an orbit, this means for each~$i$ some permutation~$\varphi_i \in  \Aut(\mathcal{M}(F))$ maps the set of literals~$D_i'$ to~$D_i$. If~$\Aut(F) = \Aut(F)_{(L\setminus L_1)} \times \cdots \times \Aut(F)_{(L\setminus L_t)}$ this implies that there is a permutation~$\varphi$ that simultaneously maps~$D_i'$ to~$D_i$ for all~$i\in \{1,\ldots,t\}$. Thus, some clause contains all~$D_i$ simultaneously.
  \end{proof}

\section{Natural Symmetric Action} \label{sec:snaction}
Before we can begin our discussion of the natural symmetric action, we need to discuss generating (nearly-)uniform random elements (see \cite{seress_2003}) of a given permutation group $\langle S \rangle$.
There is no known algorithm which produces uniform random elements of $\langle S \rangle$ in quasi linear time, even in computational group theory terminology \cite{seress_2003}.  
However, there are multiple ways to produce random elements,
most of which are proven to work well in practice, and can be implemented fairly easily \cite{seress_2003, GAP4}.
In this paper, we attempt to only make use of random elements sparingly.
Whenever we do, as is common in computational group theory, we do not consider the particular method used to generate them and simply denote the runtime of the generation with $\mu$.
Moreover, we discuss potential synergies in the SAT-symmetry context which might help to avoid random elements in practice, whenever applicable.

We now explain how to efficiently test whether a permutation group is a symmetric group in natural action.
Then, we describe more generally how to determine simultaneously for all orbits of a permutation group whether the induced action is symmetric in natural action.

Detecting symmetric permutation groups in their natural action is a well-researched problem in computational group theory. State-of-the-art practical implementations are available in modern computer algebra systems (such as in \cite{recog1.4.2}, or as described by \cite{unger2019fast}). 
Typically, a natural symmetric action is detected using a so-called probabilistic \emph{giant} test, followed by a test to ensure that the group is indeed symmetric. The tests work by computing (nearly) uniform random elements of the group and inspecting them for specific properties.

A permutation group is called a giant if it is the symmetric group or the alternating group in natural action. In many computational contexts, giants are by far the largest groups that appear, hence their name.
Because of this, giants form bottleneck cases for various algorithms and therefore often need to be treated separately.  To test whether a permutation group is a symmetric group in natural action, we first test whether the group is a giant. 
 
We leverage the following facts:
\begin{fact}[see~{\cite[Corollary 10.2.2.]{seress_2003}}]
  If a transitive permutation group of degree~$n$ contains an element with a cycle of length~$p$ for some prime~$p$ with~$n/2<p<n-2$ then~$G$ is a giant.
  \label{fact:gianttest}
  \end{fact}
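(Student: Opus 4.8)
The plan is to reduce the statement to the classical theorem of Jordan, which asserts that a primitive permutation group of degree~$n$ containing a $p$-cycle (for a prime~$p$) that fixes at least three points already contains~$\Alt(n)$. Granting this, the argument splits into two self-contained steps: first extract an honest $p$-cycle from the given element, and second show that the transitivity of~$G$ together with the size bound $p > n/2$ forces $G$ to be primitive.

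First I would isolate a pure $p$-cycle. Let $g \in G$ be the given element possessing a cycle of length~$p$. Because $p > n/2$, this cycle occupies more than half the points, so it is the unique cycle of~$g$ of length~$p$, and every other cycle of~$g$ has length strictly less than~$p$; in particular no other cycle length is divisible by the prime~$p$. Writing~$L$ for the least common multiple of the lengths of all cycles of~$g$ other than the distinguished one, we have $\gcd(L,p)=1$, so $h \coloneqq g^{L} \in G$ fixes every point outside the distinguished cycle and restricts there to a nontrivial power of a $p$-cycle, hence is itself a single $p$-cycle. Since $p < n-2$, i.e.\ $p \le n-3$, the support of~$h$ omits at least three points, so $h$ is a $p$-cycle fixing at least three points.

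Next I would establish primitivity. Suppose, towards a contradiction, that~$G$ preserves a nontrivial block system with blocks of common size~$b$, where $1 < b < n$ and $b \mid n$, giving $t = n/b \ge 2$ blocks. The $p$-cycle~$h$ induces a permutation~$\bar h$ of the~$t$ blocks whose order divides~$p$, hence is~$1$ or~$p$. If $\bar h$ has order~$1$, then~$h$ stabilises each block setwise; but~$h$ is a single $p$-cycle, so its $p$-point support, being one $h$-orbit, must lie inside a single block, forcing $b \ge p > n/2$, which contradicts $b \le n/2$. If $\bar h$ has order~$p$, then~$\bar h$ is a product of $p$-cycles on the blocks, so $t \ge p$; but $t = n/b \le n/2 < p$, again a contradiction. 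Hence no such block system exists and~$G$ is primitive.

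Finally, applying Jordan's theorem to the primitive group~$G$ and the $p$-cycle~$h$ (which fixes at least three points) yields $G \ge \Alt(n)$, so~$G$ is either~$\Alt(n)$ or~$\Sym(n)$, i.e.\ a giant. The elementary steps, namely extracting the pure cycle and ruling out imprimitivity, are routine; the real content, and the step I would simply cite rather than reprove, is Jordan's theorem itself, whose proof is the genuinely hard ingredient here.
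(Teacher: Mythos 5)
Your argument is correct: the extraction of a pure $p$-cycle via $g^L$, the two-case block argument ruling out imprimitivity using $b\le n/2<p$ and $t\le n/2<p$, and the final appeal to Jordan's theorem (a primitive group containing a prime cycle fixing at least three points contains $\Alt(n)$) are all sound. The paper gives no proof of this statement --- it is quoted verbatim as Fact~\ref{fact:gianttest} from Corollary~10.2.2 of the cited reference --- and your derivation is exactly the standard one found there, so there is nothing to compare beyond noting that the only nontrivial ingredient you rely on is Jordan's theorem itself, which you correctly cite rather than reprove.
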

  \begin{fact}[see~{\cite[Corollary 10.2.3.]{seress_2003}}]
    The proportion of elements in~$S_n$ containing a cycle of length~$p$ for a prime~$p$ with~$n/2<p<n-2$ is asymptotically~$1/\log(n)$.
    \label{fact:proportionofpcycle}
    \end{fact}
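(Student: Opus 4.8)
The plan is to compute the probability that a uniformly random $\sigma \in S_n$ has a cycle whose length is a prime $p$ in the range $n/2 < p < n-2$, and then to estimate the resulting sum with the prime number theorem. The key structural observation is that any such cycle has length strictly greater than $n/2$, and a permutation of $n$ points can contain \emph{at most one} cycle of length greater than $n/2$ (two disjoint cycles of length $>n/2$ would require more than $n$ points). Consequently, over distinct primes $p$ in the range the events ``$\sigma$ has a $p$-cycle'' are pairwise disjoint, so the probability we want is exactly the sum of the individual probabilities $\Pr[\sigma \text{ has a } p\text{-cycle}]$.

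Next I would pin down each individual probability. For a fixed $k$ with $k > n/2$, the number of $\sigma \in S_n$ having a cycle of length exactly $k$ is $\binom{n}{k}(k-1)!(n-k)! = n!/k$: choose the $k$ support points, arrange them in one of $(k-1)!$ cyclic orders, and permute the remaining points arbitrarily; no overcounting occurs precisely because at most one $k$-cycle can exist when $k > n/2$. Dividing by $n!$, the proportion of elements with a cycle of length exactly $k$ equals $1/k$. Summing over the admissible primes, the proportion in question is
\[
\sum_{\substack{p \text{ prime}\\ n/2 < p < n-2}} \frac{1}{p}.
\]

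Finally I would estimate this sum asymptotically. By the prime number theorem there are $\sim n/(2\log n)$ primes in $(n/2, n)$, and each contributes a reciprocal between $1/n$ and $2/n$, so the sum is $\Theta(1/\log n)$; the two excluded values near $n$ and the precise primality conditions do not affect the leading order. A sharper estimate via partial summation (or Mertens' theorem) gives $\sum_{n/2 < p \le n} 1/p \sim \int_{n/2}^{n} \frac{dt}{t\log t} = \log\log n - \log\log(n/2) \sim (\log 2)/\log n$, so the leading constant is in fact $\log 2$; the statement ``asymptotically $1/\log n$'' should thus be read in the $\Theta(1/\log n)$ sense. The main obstacle is this last number-theoretic estimate: establishing the order $1/\log n$ requires inputting Chebyshev- or PNT-type bounds on the distribution of primes in the dyadic interval $(n/2, n)$, whereas the combinatorial steps (disjointness and the exact $1/k$ count) are elementary.
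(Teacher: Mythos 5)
Your proposal is correct. Note that the paper does not prove this statement at all---it is imported as a black-box \emph{Fact} with a citation to Seress (Corollary 10.2.3), so there is no in-paper argument to compare against; your derivation is the standard one underlying that reference. The two combinatorial steps are sound: a permutation of $n$ points has at most one cycle of length exceeding $n/2$, so the events over distinct primes in $(n/2,\,n-2)$ are disjoint, and the exact count $\binom{n}{k}(k-1)!(n-k)! = n!/k$ gives probability exactly $1/k$ per admissible length. Your number-theoretic conclusion is also right, including the caveat that the Mertens/partial-summation estimate yields leading constant $\log 2$, i.e.\ the proportion is $\sim (\log 2)/\log n$, so the paper's phrase ``asymptotically $1/\log(n)$'' must be read as $\Theta(1/\log n)$ rather than with constant $1$---a fair, if minor, correction to the paper's informal phrasing (and consistent with how the fact is actually used in the error analysis of Algorithm~2, where only the lower bound of order $1/\log n$ matters). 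One small point of care: the asymptotic for the dyadic tail $\sum_{n/2<p\le n}1/p$ genuinely needs PNT-strength information, since the $O(1/\log x)$ error in Mertens' theorem is of the same order as the quantity being estimated; a Chebyshev bound alone gives only the $\Theta(1/\log n)$ statement, which you correctly flag as the real content needed here.
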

  Collectively, these statements show that we only need to generate few random elements of a group and inspect their cycle lengths to detect a giant. 
  To then distinguish between the alternating group and the symmetric group, we can check whether all generators belong to the alternating group.
  This can be attained using basic routines, such as examining the so-called parity of a generator (see \cite{seress_2003} for more details).
  Algorithm~\ref{alg:symact} generalizes the probabilistic test for a transitive group \cite{seress_2003} to a test which is performed simultaneously to check for a natural symmetric action on all the orbits of a group.
\newcommand{\algorithmicbreak}{\textbf{break}}

\begin{algorithm}
  \SetAlgoLined
	\SetAlgoNoEnd
	\caption{Compute whether orbits induce a natural symmetric action.} \label{alg:symact}
	\Fn{\SymmetricAction{$S, O$}}{
		\SetKwInOut{Input}{Input}
		\SetKwInOut{Output}{Output}
		\Input{generators $S$ with $\langle S \rangle = \Gamma \leq \Sym(\Omega)$, orbits $O \coloneqq \{\Delta_1, \dots{}, \Delta_m\}$ of $\Gamma$}
		\Output{set of orbits $S_O \subseteq O$, where $\Delta \in S_O$ induces a natural symmetric action}
    \tcp{first, we filter orbits which can at most be alternating}
    \For{$\Delta_i \in O$}{
      $t \coloneqq \top$\;
      \For{$s \in S$}{
        \lIf(\tcp*[f]{$\Delta_i$ cannot be alternating}){$s|_{\Delta_i} \not\in \Alt(\Delta_i)$}{
          $t \coloneqq \bot$
        }
      }
      \lIf(\tcp*[f]{$\Delta_i$ cannot induce symmetric action}){$t = \top$}{
        $O \coloneqq O \smallsetminus \{\Delta_i\}$
      }
    }
    \tcp{second, we test whether orbits are giants}
    $S_O \coloneqq \emptyset$\;
    \For(\tcp*[f]{repeat $c\log(|\Omega|)^2$ times}){$\_ \in \{1, \dots{}, c\log(|\Omega|)^2\}$}{
      $\varphi \coloneqq$ uniform random element of $\langle S|_{\cup_{\Delta_i \in S_O} \Delta_i} \rangle$\;
      \For{$\Delta_i \in O$}{
      $p \coloneqq$       cycle length of longest cycle in $\varphi$ on $\Delta_i$\;
      \If{$p > n/2 \wedge p < n-2 \wedge p \text{ prime}$}{
        $S_O \coloneqq  S_O \cup \{\Delta_i\}$\tcp*{$\Delta_i$ induces symmetric action}
        $O \coloneqq O \smallsetminus \{\Delta_i\}$\;
      }
      }
    }
    \Return{$S_O$}
	}
\end{algorithm}

\emph{Description of Algorithm~\ref{alg:symact}.} 
Overall, the algorithm samples uniform random elements of the group and checks whether the random elements exhibit long prime cycles (see Fact~\ref{fact:gianttest}). 
More precisely, the algorithm first distinguishes between potential alternating and symmetric groups on each orbit.
Then, it computes $d = c\log(|\Omega|)^2$ random elements.
For each random element and each orbit, we then apply the giant test (Fact~\ref{fact:gianttest}) to check whether the element certifies that the orbit induces a natural symmetric action.

\emph{Runtime of Algorithm~\ref{alg:symact}.} 
Let us assume access to random elements of the joint graph/group pair $(G, S)$ with $\langle S \rangle \leq \Sym(\Omega)$ in time $\mu$.
Assuming a random element can be produced in time $\mu$, the algorithm runs in worst-case time $\mathcal{O}(\log(|\Omega|)^2 (\mu + |\Omega|))$.

\emph{Correctness of Algorithm~\ref{alg:symact}.} Regarding the correctness of the algorithm, the interesting aspect is to discuss the error probability. We argue that the error probability is at most~$1/4$ if~$c$ is chosen to larger than~$2\ln(2)$. Practical implementations use $c = 20$ in similar contexts \cite{recog1.4.2}.

If an orbit~$\Delta$ does not induce a symmetric action, no error can be made. If an orbit~$\Delta$ induces a symmetric action, by Fact~\ref{fact:proportionofpcycle}, the probability that one iteration does not produce a long prime cycle for~$\Delta$ is at most~$(1-1/\log(n))$. Thus, the probability that none of the~$c\log(|\Omega|)^2$ iterations produces a long prime cycle for~$\Delta$ is bounded by~${(1-1/\log(n))}^{c\log(|\Omega|)^2}\leq (1/e)^{c\log(|\Omega|)}\leq  1/(4|\Omega|)$ since~$c>2\ln(2)$.
Since there can be at most $|\Omega|$ orbits, using the union bound, we get that the probability that the test fails for at least one of the orbits is at most~$|\Omega|\cdot 1/(4|\Omega|)= 1/4$.

When trying to compute a natural symmetric action on a graph/group pair, the following heuristics can be implemented in instance-linear time.

The first and most straightforward heuristic is that most of the time, it is fairly clear that the generators describe a natural symmetric action.
In particular, symmetry detection based on depth-first search seem to often produce generators that are transpositions. 
From these symmetric actions can be detected immediately.
This fact is implicitly used by column interchangeability heuristics in use today.
There are however many more ways to detect a natural symmetric action, many of which are implemented in modern computer algebra systems such as \cite{GAP4,recog1.4.2}.

Next, the symmetry detection preprocessor \sassy{} \cite{DBLP:journals/corr/abs-2302-06351} as well as the preprocessing used by \Traces{} sometimes detect a natural symmetric action on an orbit, by detecting certain structures of a graph. 
In these cases, the result should be immediately communicated to consecutive algorithms.
We can also use the graph structure to immediately discard orbits from the test of Algorithm~\ref{alg:symact}.
In particular, all orbits $\Delta$ where $G[\Delta]$ is neither the empty graph nor the complete graph cannot have a natural symmetric action.

Furthermore, the generators produced by \dejavu{} and \Traces{} are fairly random (for some parts even uniformly random \cite{DBLP:conf/esa/AndersS21}).
This means they should presumably work well with the probabilistic tests above.
Lastly, internally, symmetry detection tools often produce so-called Schreier-Sims tables~\cite{seress_2003}, which can be used to produce random elements effectively. 

Indeed, for our running example $F_E$, the natural symmetric action can be detected quite easily: let us consider the generators $S_E$ reduced to the orbit of $\{x_1, x_2, x_3\}$.
We observe that there is a generator $(x_1, x_2, x_3)$ and $(x_1, x_2)$.
While this is not a set of generators detected by current symmetry exploitation algorithms \cite{DBLP:conf/sat/Devriendt0BD16, DBLP:journals/mpc/PfetschR19}, this is indeed also an arguably obvious encoding of a natural symmetric action: 
for an orbit of size $n$, an $n$-cycle in conjunction with a transposition encodes a symmetric action. 

\section{Equivalent Orbits} \label{sec:rowinterchange}
Towards our overall goal to compute row interchangeability subgroups, we can now already determine which orbits induce a natural symmetric action.
By Lemma~\ref{lem:columninterchangeeq}, to detect elementary row interchangeability subgroups, we only miss a procedure for orbit equivalence.

We describe now how to compute equivalent orbits as the automorphism group of a special, purpose-built graph. 
Then, we give a faster algorithm computing equivalent orbits with natural symmetric action in instance-quasi-linear time, under mild assumptions. 
In particular, we can find all classes of equivalent orbits described by Lemma~\ref{lem:columninterchangeeq}.

\subsection{Cycle Type Graph}\label{subsec:cycle:type}
If two orbits are equivalent, they appear in every permutation in ``the same manner'': for example, if orbits~$\Delta_1$ and~$\Delta_2$ are equivalent then for every generator~$g$, the cycle types~$g$ induces on~$\Delta_1$ are the same as the cycle types~$g$ induces on~$\Delta_2$.
More generally, equivalent orbits must be equivalent with respect to \emph{every} generating set of the group. We introduce the \emph{cycle type graph} whose symmetries capture orbit equivalence.
This means we can use a symmetry detection tool to detect equivalent orbits.

For a group $\Gamma \leq \Sym(\Omega)$ and generating set $\langle S \rangle = \Gamma$, we define the \emph{cycle type graph} $\mathcal{C}(S)$ as follows.

Firstly, the \textbf{vertex set} of $\mathcal{C}(S)$ is the disjoint union~$V(\mathcal{C}(S))\coloneqq \Omega  \dot\cup \dot\bigcup_{g\in S}\supp (g)$. In other words, there is a vertex for each element of~$\Omega$ and there are separate elements for all the points moved by the generators. In particular if a point is moved by several generators there are several copies of the point. 

Secondly, the \textbf{edges} of $\mathcal{C}(S)$ are added as follows: each corresponding vertex for~$x\in \supp(g)$ is adjacent to the corresponding vertex for element~$x\in \Omega$ via an undirected edge. Furthermore, there are directed edges~$\{(x,g(x))\mid x\in \supp(g) \}$. In other words, for each generator $s_i \in S$, we add directed cycles for each cycle of the generator, as shown in Figure~\ref{fig:cyclegadget}.
In the following, we refer to directed cycles added in this manner as \emph{cycle gadgets}.

Thirdly, we define a \textbf{vertex coloring} for $\mathcal{C}(S)$. For this we enumerate the generators, i.e.,~$S = \{s_1, \dots{}, s_m\}$. We then color the vertices in~$\Omega$ with color~$0$ and an~$x\in \supp(g_i)$ is colored with~$(i,t)$, where~$t$ is the length of the cycle in~$g_i$ containing~$x$. With this, the cycle type graph is constructed in such a way that its automorphism structure captures equivalence of orbits, as is described in more detail below.

We record several observations on automorphisms of the cycle type graph.
\begin{restatable}{lemma}{cyclecentralizer}
  If $\Delta_1, \Delta_2$ are orbits of $\Gamma$ then there is some $b \in \Aut(\mathcal{C}(S))$ for which $b(\Delta_1) = \Delta_2$, if and only if $\Delta_1 \equiv \Delta_2$. \label{lem:cyclecentralizer}
\end{restatable}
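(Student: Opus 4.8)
The plan is to prove both directions of the equivalence by carefully analyzing how automorphisms of the cycle type graph $\mathcal{C}(S)$ interact with the orbit structure of $\Gamma$. The key conceptual point is that the cycle gadgets, together with the coloring, rigidly encode exactly the information that defines orbit equivalence: the way each generator acts cycle-by-cycle on each orbit.

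\textbf{The easy direction ($\Delta_1 \equiv \Delta_2 \Rightarrow$ existence of $b$).} First I would assume $\Delta_1 \equiv \Delta_2$, so by definition there is a bijection $b_0 \colon \Delta_1 \to \Delta_2$ commuting with every $\varphi \in \Gamma$, in particular with every generator. I would extend $b_0$ to a candidate automorphism $b$ of $\mathcal{C}(S)$ as follows: on the base copy $\Omega$, let $b$ agree with $b_0$ on $\Delta_1 \cup \Delta_2$ (mapping $\Delta_1 \to \Delta_2$ and, to make it a bijection, $\Delta_2 \to \Delta_1$ via $b_0^{-1}$, fixing all other points of $\Omega$). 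On the support-copies, for each generator $g_i$ and each $x \in \supp(g_i) \cap (\Delta_1 \cup \Delta_2)$, map the copy of $x$ to the copy of $b(x)$ in the same generator-block, fixing support-copies of other points. I would then verify $b$ preserves all three edge types and the coloring. The crucial check is that the directed edges $(x, g_i(x))$ are preserved: this is exactly the commuting condition $b_0(g_i(x)) = g_i(b_0(x))$, which holds since $b_0$ commutes with $g_i \in \Gamma$. Color preservation requires that $x$ and $b(x)$ lie in cycles of the \emph{same length} under each $g_i$; this again follows from the commuting relation, since conjugation by a commuting bijection preserves cycle structure. Hence $b \in \Aut(\mathcal{C}(S))$ with $b(\Delta_1) = \Delta_2$.

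\textbf{The harder direction (existence of $b \Rightarrow \Delta_1 \equiv \Delta_2$).} Here I would take an arbitrary $b \in \Aut(\mathcal{C}(S))$ with $b(\Delta_1) = \Delta_2$ and extract from it a witnessing bijection. Since the coloring assigns color $0$ only to base-copy vertices, $b$ must permute $\Omega$; I define $b_0 := b|_{\Delta_1}$. The task is to show $b_0$ commutes with every $\varphi \in \Gamma$, for which it suffices to check every generator $g_i \in S$ (the commuting relation is closed under composition and inverses, so it lifts from a generating set to all of $\Gamma$). The directed-edge structure of the cycle gadget for $g_i$ forces $b$ to send the $g_i$-orbit (cycle) through $x$ to the $g_i$-orbit through $b(x)$; because the undirected edges tie each support-copy to its base copy, and $b$ respects both, I can read off that $b(g_i(x)) = g_i(b(x))$ for $x \in \Delta_1$. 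The main obstacle I anticipate is the bookkeeping around \emph{multiple copies} of a point: when $x$ lies in $\supp(g_i)$ for several generators, there are several support-vertices all joined to the single base vertex $x$, and I must argue $b$ maps these copies consistently. This is where the coloring $(i,t)$ does the work: since color includes the generator index $i$, $b$ cannot mix the copy of $x$ belonging to $g_i$ with the copy belonging to $g_j$, so the cycle-gadget argument can be run independently for each generator. Once consistency across copies is established, the directed edges pin down $b_0(g_i(\delta)) = g_i(b_0(\delta))$ for all $\delta \in \Delta_1$, giving the commuting condition and hence $\Delta_1 \equiv \Delta_2$.

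I would close by noting that $b_0$ is indeed a bijection $\Delta_1 \to \Delta_2$: it is the restriction of the bijection $b$, and $b(\Delta_1) = \Delta_2$ by hypothesis, while the image of an orbit under a commuting map is forced to land in a single orbit of equal size. The one subtlety worth flagging explicitly is that the coloring must genuinely prevent degenerate automorphisms that could identify inequivalent orbits; I expect the separation of support-copies by generator index, combined with recording cycle length $t$ in the color, to be exactly what rules these out, so I would make sure to invoke both components of the color when arguing rigidity.
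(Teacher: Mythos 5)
Your proposal is correct and follows essentially the same route as the paper's proof: the forward direction constructs the same swap permutation ($b_0$ on $\Delta_1$, $b_0^{-1}$ on $\Delta_2$, identity elsewhere) and checks it preserves the gadget structure, while the reverse direction observes that any automorphism of $\mathcal{C}(S)$ conjugates each generator to itself and that this commuting property lifts from generators to all of $\Gamma$. You are in fact somewhat more explicit than the paper about the bookkeeping (extending $b$ to the support-copies, and using the $(i,t)$-coloring to keep the copies of a point separated per generator), which the paper compresses into ``due to the nature of the graph.''
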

\begin{proof}
  Let us first make some general remarks about elements $b \in \Aut(\mathcal{C}(S))$. Due to the nature of the graph, for every generator~$g$ we have~$b\circ g \circ b^{-1}=g$. In other words conjugation with~$b$ leaves the generating set invariant.
  Generally, conjugation is a group homomorphism: for all group elements~$g_1,g_2$ we have~$(b\circ g_1\circ b^{-1}) (b\circ g_2\circ b^{-1}) = b\circ (g_1 g_2)\circ b^{-1}$. Thus, by induction, since all elements $g \in \Gamma$ can be generated from the generators, we have $b\circ g \circ b^{-1}$.
  
  Now assume there is some $b \in \Aut(\mathcal{C}(S))$ for which $b(\Delta_1) = \Delta_2$.
  Then, by the previous discussion, for all elements of the group $\varphi \in \Gamma$ we have $b\circ \varphi\circ b^{-1} = \varphi$ and thus $b\circ \varphi = \varphi\circ b$.
  Hence, for any $\delta \in \Delta_1$, we have $\varphi(b(\delta)) = b(\varphi(\delta))$.
  Thus, the orbits are equivalent.

  On the other hand, assume two orbits $\Delta_1$ and $\Delta_2$ are equivalent, and let $b'$ denote the bijection $b' \colon \Delta_1 \to \Delta_2$. Define~$b$ to be the permutation for which~$b|_{\Delta_1}=b$,~$b|_{\Delta_2}= b'^{-1}$ and~$b|_{\Omega\setminus (\Delta_1\cup\Delta_2)} =\operatorname{id}$.
  Then~$b$ is an automorphism of $\mathcal{C}(S)$ with~$b(\Delta_1)= \Delta_2$.
\end{proof}
\noindent We may formulate the observations in group theoretic terms, giving the following lemma.
\begin{lemma}\label{lem:aut:of:cycle:type:graph:is:centralizer} Given a group $\Gamma \leq S(\Omega)$, its centralizer in the symmetric group $C_{\Sym(\Omega)}(\Gamma)$ and the cycle type graph $\mathcal{C}(\Gamma)$ are a joint graph/group pair, i.e., $\Aut(\mathcal{C}(\Gamma)) = C_{S(\Omega)}(\Gamma)$.
  \label{lem:graphgroupcentralizer}
\end{lemma}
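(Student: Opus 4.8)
The plan is to show that restriction to $\Omega$ is a group isomorphism between $\Aut(\mathcal{C}(S))$ and $C_{\Sym(\Omega)}(\Gamma)$, where $S$ is the generating set used to build the cycle type graph (the result is independent of the choice of $S$, since commuting with every generator is equivalent to commuting with all of $\Gamma$). Write $x$ for the vertex of $\mathcal{C}(S)$ corresponding to an element of $\Omega$, and $x^{(i)}$ for the copy of $x\in\supp(g_i)$ living in the cycle gadget of the generator $g_i$. The first observation is that the coloring rigidly constrains any $b\in\Aut(\mathcal{C}(S))$: since the $\Omega$-vertices form the unique color class $0$, the map $b$ permutes $\Omega$ among itself, so $\sigma\coloneqq b|_{\Omega}$ is a well-defined element of $\Sym(\Omega)$. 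Likewise, the color $(i,t)$ forces $b$ to send each gadget of $g_i$ to itself while preserving cycle lengths.

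First I would establish that $b$ is completely determined by $\sigma$, giving injectivity of the restriction map. The undirected edge joining $x^{(i)}$ to $x$ means the color-$(i,\cdot)$ neighbour of $x$ is exactly $x^{(i)}$; hence $b(x^{(i)})$ must be the unique color-$(i,\cdot)$ vertex adjacent to $b(x)=\sigma(x)$, namely $\sigma(x)^{(i)}$. In particular this already forces $\sigma(\supp(g_i))=\supp(g_i)$ for each $i$. Next I would use the directed edges to show $\sigma\in C_{\Sym(\Omega)}(\Gamma)$: the cycle gadget contains the directed edge $x^{(i)}\to g_i(x)^{(i)}$, and applying $b$ gives the directed edge $\sigma(x)^{(i)}\to \sigma(g_i(x))^{(i)}$; since the unique directed edge leaving $\sigma(x)^{(i)}$ goes to $g_i(\sigma(x))^{(i)}$, we obtain $\sigma(g_i(x))=g_i(\sigma(x))$ on $\supp(g_i)$, and on the complement both sides fix the point because $\sigma$ preserves $\supp(g_i)$. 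Thus $\sigma g_i=g_i\sigma$ for every generator, so $\sigma$ centralizes $\Gamma$. This recovers, now purely from the graph structure, the conjugation identity $b\circ g\circ b^{-1}=g$ already noted in the proof of Lemma~\ref{lem:cyclecentralizer}.

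For the reverse inclusion I would take any $\sigma\in C_{\Sym(\Omega)}(\Gamma)$ and build $b$ by setting $b(x)=\sigma(x)$ on $\Omega$ and $b(x^{(i)})=\sigma(x)^{(i)}$ on the gadgets. Well-definedness requires $\sigma(\supp(g_i))=\supp(g_i)$, which follows since $x\in\supp(g_i)\iff g_i(x)\neq x\iff g_i(\sigma(x))=\sigma(g_i(x))\neq\sigma(x)\iff\sigma(x)\in\supp(g_i)$. Color-preservation follows because $\sigma g_i\sigma^{-1}=g_i$ maps each cycle of $g_i$ to a cycle of $g_i$ of the same length, so $\sigma(x)$ lies in a $g_i$-cycle of the same length as $x$. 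Finally I would verify that $b$ preserves both edge types: undirected edges $x^{(i)}\!-\!x$ go to $\sigma(x)^{(i)}\!-\!\sigma(x)$, and directed edges $x^{(i)}\to g_i(x)^{(i)}$ go to $\sigma(x)^{(i)}\to g_i(\sigma(x))^{(i)}$ using $\sigma g_i=g_i\sigma$. Hence $b\in\Aut(\mathcal{C}(S))$ with $b|_{\Omega}=\sigma$, and since restriction is a homomorphism, the two inclusions together with injectivity give the claimed identification $\Aut(\mathcal{C}(\Gamma))=C_{\Sym(\Omega)}(\Gamma)$.

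The main obstacle is the bookkeeping between the $\Omega$-layer and the generator gadgets: one must argue carefully that the vertex coloring alone pins down which gadget and which cycle length an automorphism can touch, and that the undirected matching between each $\Omega$-vertex and its copies makes the gadget action a mere shadow of the action on $\Omega$. Once this rigidity is in place, both inclusions reduce to the single algebraic fact that $\sigma$ centralizing $\Gamma$ is equivalent to $\sigma$ permuting each $g_i$-cycle structure compatibly, so the remaining verifications are routine.
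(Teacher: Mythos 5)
Your proof is correct and follows the same route the paper takes: the paper states this lemma as a group-theoretic reformulation of the observations in the proof of Lemma~\ref{lem:cyclecentralizer}, whose key step is precisely that ``due to the nature of the graph'' every $b\in\Aut(\mathcal{C}(S))$ satisfies $b\circ g\circ b^{-1}=g$ for each generator, together with the converse extension of a centralizing permutation to a graph automorphism. You simply make explicit the rigidity argument (colors pin down the gadgets, the undirected matching pins down the gadget action, the directed edges force commutation) that the paper leaves implicit.
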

Since the centralizer of~$\Sym(\Omega)$ in~$\Sym(\Omega)$ is trivial for~$|\Omega|> 2$, we get the following corollary.
\begin{corollary} If $\Gamma = \Sym(\Omega)$ with~$|\Omega|> 2$, the cycle type graph of $\Gamma$ is asymmetric. 
\end{corollary}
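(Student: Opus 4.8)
The plan is to reduce the corollary entirely to Lemma~\ref{lem:aut:of:cycle:type:graph:is:centralizer}, which already identifies $\Aut(\mathcal{C}(\Gamma))$ with the centralizer $C_{\Sym(\Omega)}(\Gamma)$. Instantiating $\Gamma = \Sym(\Omega)$, I would observe that the centralizer of the full symmetric group inside itself is precisely its center $Z(\Sym(\Omega))$, so the task is to show this center is trivial whenever $|\Omega| > 2$; recall that ``asymmetric'' means exactly $\Aut(\mathcal{C}(\Gamma)) = \{\mathrm{id}\}$.

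First I would recall the classical fact that $Z(\Sym(\Omega))$ is trivial for $|\Omega| \geq 3$, and, for self-containedness, include a short argument. Suppose $\sigma \in \Sym(\Omega)$ commutes with every permutation; in particular it commutes with every transposition $(a,b)$. Using the conjugation identity $\sigma (a,b) \sigma^{-1} = (\sigma(a), \sigma(b))$, the centrality of $\sigma$ forces $\{\sigma(a), \sigma(b)\} = \{a,b\}$ for all pairs $a \neq b$.

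Next I would argue that this constraint pins $\sigma$ to the identity once $|\Omega| > 2$. If $\sigma(a) = c \neq a$ for some $a$, I pick a third point $b \notin \{a,c\}$ (available precisely because $|\Omega| \geq 3$), and then $\{\sigma(a), \sigma(b)\} = \{a,b\}$ becomes $\{c, \sigma(b)\} = \{a,b\}$, which is impossible since $c \notin \{a,b\}$. Hence $\sigma$ fixes every point, and $Z(\Sym(\Omega)) = \{\mathrm{id}\}$.

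Combining these, $\Aut(\mathcal{C}(\Gamma)) = C_{\Sym(\Omega)}(\Sym(\Omega)) = Z(\Sym(\Omega)) = \{\mathrm{id}\}$, so $\mathcal{C}(\Gamma)$ is asymmetric. There is no genuine obstacle here: the corollary is immediate from the lemma, and the only real content is the triviality of the center of the symmetric group, whose case analysis is exactly what makes the hypothesis $|\Omega| > 2$ necessary (for $|\Omega| = 2$ the group $\Sym(\Omega)$ is abelian, equals its own center, and the cycle type graph does admit a nontrivial automorphism).
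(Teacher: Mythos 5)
Your proof takes exactly the same route as the paper: the paper derives the corollary from Lemma~\ref{lem:aut:of:cycle:type:graph:is:centralizer} in a single sentence, noting that the centralizer of $\Sym(\Omega)$ in $\Sym(\Omega)$ is trivial for $|\Omega|>2$. Your write-up merely fills in the standard argument (via conjugation of transpositions) that the center of the symmetric group is trivial, which the paper takes as known; the proposal is correct.
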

It follows from the corollary that for two equivalent orbits with a natural symmetric action the bijection~$b$ commuting with the generators and interchanging the orbits is in fact unique.

While the cycle type graph and the centralizer in the symmetric group
$(\mathcal{C}(\Gamma), C_{\Sym(\Omega)}(\Gamma))$
is a joint graph/group pair, we still have to compute the group: so far, we only have access to~$\mathcal{C}(\Gamma)$.
One option is a symmetry detection tool. However, this goes against our goal of invoking symmetry detection tools unnecessarily often --- and against our goal to find instance-linear algorithms.
Hence, instead of computing the entire automorphism group, our approach is to make due with less:
in the following, we enhance the cycle type graph in a way such that it becomes ``easy'' for color refinement. 
Color refinement is usually applied as a heuristic approximating the orbit partition of a graph.
However, on the enhanced graphs, we prove that color refinement is guaranteed to compute the orbit partition.
Then, we show that the orbit partition suffices to determine equivalent orbits.
Overall, these methods are only guaranteed to work for orbits with a natural symmetric action, as is the case in elementary row interchangeability groups. 

\subsection{Symmetries of Cycle Type Graph with Unique Cycles}
Our goal is now to enhance the cycle type graph such that color refinement is able to compute its orbit partition.
This in turn enables us to detect equivalent orbits, and in turn elementary row interchangeability groups.
Towards this goal, we first discuss an algorithm to compute unique cycles on orbits. 
Unique cycles are a key ingredient for our enhanced cycle type graph.
These cycles should be invariant with respect to an ordered generating set, a concept we explain first.

Given an ordered generating set~$(s_1,\ldots,s_m)$ for a permutation group $\Gamma$, i.e., $\Gamma = \langle \{s_i\mid i\in\{1,\ldots,m\}\}\rangle \leq \Sym(\Omega)$, a 
permutation~$\varphi\in \Sym(\Omega)$ 
\emph{fixes} the ordered generating set (point-wise under conjugation) if for all~$s_i$ we have~$\varphi \circ s_i \circ \varphi^{-1} = s_i$.

A permutation~$\pi\in \Sym(\Omega)$  is \emph{invariant} with respect to the ordered generating set if all permutations~$\varphi$ that fix~$(s_1,\ldots,s_m)$ under conjugation also fix~$(s_1,\ldots,s_m,\pi)$ under conjugation\footnote{In group theoretic terms,~$\pi$ is in $C_{\Sym(\Omega)}(C_{\Sym(\Omega)}(\langle s_1,\ldots,s_m\rangle))$.}. Note that all group elements in~$\pi\in \Gamma$ are invariant. However, there can be further invariant permutations. 

\begin{figure}
  \centering
  \begin{subfigure}{0.49\textwidth}
    \centering
    \scalebox{0.75}{\begin{tikzpicture}[scale=2,yscale=0.6]
   \clip(-0.125,0) rectangle (4.25,2);
    \foreach \i in {0,...,3}{
    \node[draw, fill=black, circle] at (0.5*\i, 0.5)  (a\i)  {};
    }
    \foreach \i in {0,...,3}{
    \node[draw, fill=black, circle] at (2.5+0.5*\i, 0.5)  (b\i)  {};
    }
    \foreach \i in {0,...,3}{
    \node[draw, fill=cyan!50, circle] at (0.5*\i, 1)  (aa\i)  {};
    \draw[ultra thick] (a\i) -- (aa\i);
    }
    \foreach \i in {0,...,3}{
    \node[draw, fill=cyan!50, circle] at (2.5+0.5*\i, 1)  (bb\i)  {};
    \draw[ultra thick] (b\i) -- (bb\i);
    }
  
    \foreach \i in {0,...,2}{
      \pgfmathtruncatemacro{\il}{mod(\i+1,4)}
      \draw[ultra thick,-stealth] (aa\i) -- (aa\il);
      \draw[ultra thick,-stealth] (bb\i) -- (bb\il);
    }
  
    \draw[ultra thick,-stealth] (aa3) to[out=90+45,in=90-45] (aa0);
    \draw[ultra thick,-stealth] (bb3) to[out=90+45,in=90-45] (bb0);
  \end{tikzpicture}}
    \caption{Cycle type graph.} \label{fig:cyclegadget}
  \end{subfigure}
\begin{subfigure}{0.49\textwidth}
  \centering
  \scalebox{0.75}{\begin{tikzpicture}[scale=2,yscale=0.6]

 \foreach \i in {0,...,7}{
  \node[draw, fill=orange!50, circle] at (0.25+0.5*\i, 0)  (c\i)  {};
  }

  \foreach \i in {0,...,3}{
  \node[draw, fill=black, circle] at (0.5*\i, 0.5)  (a\i)  {};
  }
  \foreach \i in {0,...,3}{
  \node[draw, fill=black, circle] at (2.5+0.5*\i, 0.5)  (b\i)  {};
  }
  \foreach \i in {0,...,3}{
  \node[draw, fill=cyan!50, circle] at (0.5*\i, 1)  (aa\i)  {};
  \draw[ultra thick] (a\i) -- (aa\i);
  }
  \foreach \i in {0,...,3}{
  \node[draw, fill=cyan!50, circle] at (2.5+0.5*\i, 1)  (bb\i)  {};
  \draw[ultra thick] (b\i) -- (bb\i);
  }

  \foreach \i in {0,...,2}{
    \pgfmathtruncatemacro{\il}{mod(\i+1,4)}
    \draw[ultra thick,-stealth] (aa\i) -- (aa\il) node [pos=0.35, inner sep=0, fill=white] {\small $1$};
    \draw[ultra thick,-stealth] (bb\i) -- (bb\il) node [pos=0.35, inner sep=0, fill=white] {\small $1$};
  }

  \foreach \i in {0,...,6}{
    \pgfmathtruncatemacro{\il}{mod(\i+1,8)}
    \draw[ultra thick,-stealth] (c\i) -- (c\il);
  }

  \foreach \i in {0,...,3}{
    \draw[ultra thick] (a\i) -- (c\i);
  }

  \foreach \i in {0,...,3}{
    \pgfmathtruncatemacro{\il}{\i+4}
    \draw[ultra thick] (b\i) -- (c\il);
  }

  \draw[ultra thick,-stealth] (c7) to[out=270-45,in=270+45,looseness=0.5] (c0);

  \draw[ultra thick,-stealth] (aa3) to[out=90+45,in=90-45] node [pos=0.5, inner sep=0, fill=white] {\small $5$} (aa0);
  \draw[ultra thick,-stealth] (bb3) to[out=90+45,in=90-45] node [pos=0.5, inner sep=0, fill=white] {\small $5$} (bb0);
\end{tikzpicture}}
  \caption{Enhanced cycle type graph} \label{fig:cyclegadgetcan}
\end{subfigure}
\caption{The cycle type graph and enhanced cycle type graph. The figure shows a cycle type gadget and canonical cyclic order for the permutation $(1 2 3 4)(5 6 7 8)$. Automorphisms of this graph are elements of the centralizer, which indicate equivalent orbits.} 
\end{figure}
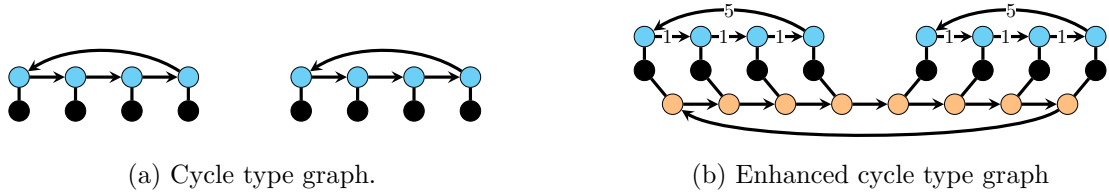

We say a permutation~$\pi$ has a \emph{unique cycle} if for some length~$\ell>1$, the permutation~$\pi$ contains exactly one cycle of~$\ell$. We now describe a two-step process. Step one is to compute an invariant permutation with a unique cycle. Step two is to use this to compute an invariant permutation with a cyclic order.

\textbf{Unique Cycle from Generators.} 
As a first step we now need an invariant \emph{unique cycle} to proceed. 
We argue how to compute such a cycle for orbits on which our group induces a natural symmetric action. 

We may use random elements to find a unique cycle. In fact, if we perform the giant test of Section~\ref{sec:snaction}, we get access to a unique cycle.
However, in that section we needed a prime length cycle. If we are only interested in unique cycles, not necessarily of prime length, this process terminates much more quickly:
\emph{Golomb's constant}~\cite{GOLOMB199898} measures, as $n \to \infty$, the probability that a random element $\varphi \in S_n$ has a cycle of length greater than $\frac{n}{2}$. The limit is greater than $\frac{4}{5}$.

In practice the existence of a unique cycle is a mild assumption: on the one hand some practical heuristics only apply if specific combinations of transpositions are present in the generators~\cite{DBLP:conf/sat/Devriendt0BD16, DBLP:journals/mpc/PfetschR19}. Each transposition is a unique cycle. 
On the other hand, randomly distributed automorphisms, as returned by \Traces{} and \dejavu{}, satisfy having a unique cycle with high probability, as argued above.

\textbf{Unique Cycle To Cyclic Order.}
We now assume we are given an invariant unique cycle $C$.
The idea is now to extend $C$ using the generators $S$ to a cycle which encompasses the entire orbit.
Crucially, the extension ensures that the result is still invariant (i.e. if we do this for all orbits simultaneously, the final permutation will be invariant).

We now describe the \emph{cycle overlap} algorithm, which gets as input a directed cycle $C$, as well as a collection of cycles $D_1, \dots{}, D_m$ which must be pair-wise disjoint. 
Furthermore, each $D_i$ must have one vertex in common with $C$.
The result is a cycle $C'$ that contains all vertices of all the cycles.
A formal description can be found in Algorithm~\ref{alg:overlap}.

\begin{figure}
  \centering
  \scalebox{0.8}{\begin{tikzpicture}[scale=2,xscale=0.95,yscale=0.8]

  \foreach \i in {0,...,3}{
  \node[draw, fill=cyan!50, circle] at (0.5*\i+0.75, 0.5)  (aa\i)  {\footnotesize $\i$};
  }

  \node[draw, fill=cyan!50, circle] at (0, 0)  (bb0)  {\footnotesize $0$};
  \node[draw, fill=orange!50, circle] at (0.5*1, 0)  (bb1)  {\footnotesize \Letter{1}};
  \node[draw, fill=cyan!50, circle] at (0.5*2, 0)  (bb2)  {\footnotesize $1$};
  \foreach \i in {3,...,4}{
    \pgfmathtruncatemacro{\il}{\i-1}
  \node[draw, fill=orange!50, circle] at (0.5*\i, 0)  (bb\i)  {\footnotesize \Letter{\il}};
  }

  \node[draw, fill=cyan!50, circle] at (0.5*5, 0)  (bb5)  {\footnotesize $2$};
  \node[draw, fill=orange!50, circle] at (0.5*6, 0)  (bb6)  {\footnotesize d};

  \foreach \i in {0,...,2}{
    \pgfmathtruncatemacro{\il}{mod(\i+1,4)}
    \draw[ultra thick,-stealth] (aa\i) -- (aa\il);
  }

  \foreach \i in {0,...,3}{
    \pgfmathtruncatemacro{\il}{mod(\i+1,5)}
    \draw[ultra thick,-stealth] (bb\i) -- (bb\il);
  }

  \draw[ultra thick,-stealth] (bb5) -- (bb6);

  \draw[ultra thick,-stealth] (aa3) to[out=90+60,in=90-60] (aa0);
  \draw[ultra thick,-stealth] (bb4) to[out=270-60,in=270+60] (bb0);
  \draw[ultra thick,-stealth] (bb6) to[out=270-60,in=270+60] (bb5);

  \draw[ultra thick, dashed, draw=black!50] (aa0) -- (bb0);
  \draw[ultra thick, dashed, draw=black!50] (aa1) -- (bb2);
  \draw[ultra thick, dashed, draw=black!50] (aa2) -- (bb5);

  \node[] at (5-1.5,0.25) (x)  {};
  \node[] at (5-0.75,0.25) (y)  {};
  \draw[-stealth, very thick] (x) to (y);

  \node[draw, fill=cyan!50, circle] at (-0.25+0.5*0   + 5, 0.25)  (cc0)  {\footnotesize $0$};
  \node[draw, fill=orange!50, circle] at (-0.25+0.5*1 + 5, 0.25)  (cc1)  {\footnotesize a};
  \node[draw, fill=cyan!50, circle] at (-0.25+0.5*2   + 5, 0.25)  (cc2)  {\footnotesize $1$};
  \node[draw, fill=orange!50, circle] at (-0.25+0.5*3 + 5, 0.25)  (cc3)  {\footnotesize b};
  \node[draw, fill=orange!50, circle] at (-0.25+0.5*4 + 5, 0.25)  (cc4)  {\footnotesize c};
  \node[draw, fill=cyan!50, circle] at (-0.25+0.5*5   + 5, 0.25)  (cc5)  {\footnotesize $2$};
  \node[draw, fill=orange!50, circle] at (-0.25+0.5*6 + 5, 0.25)  (cc6)  {\footnotesize d};
  \node[draw, fill=cyan!50, circle] at (-0.25+0.5*7   + 5, 0.25)  (cc7)  {\footnotesize $3$};

  \foreach \i in {0,...,6}{
    \pgfmathtruncatemacro{\il}{mod(\i+1,8)}
    \draw[ultra thick,-stealth] (cc\i) -- (cc\il);
  }
  \draw[ultra thick,-stealth] (cc7) to[out=270-60,in=270+60] (cc0);

\end{tikzpicture}}
  \caption{An illustration of the cycle overlap algorithm. Overlapping cycles $C = (0,1,2,3)$ and $D = \{(0,a,1,b,c), (2,d)\}$.} \label{fig:overlapcycle}
\end{figure}
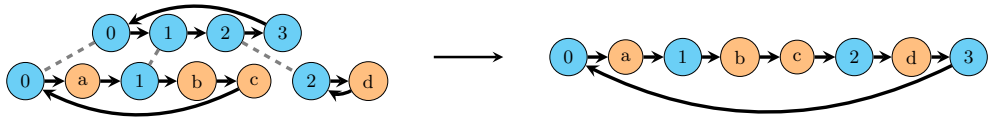

\emph{Description of Algorithm~\ref{alg:overlap}.} The algorithm first checks which vertices of $D= D_1\cup\cdots\cup D_m$ appear in $C$, and records them into the set $C'$.
Then, for each $c \in C'$ in the overlap of $C$ and $D$, the algorithm walks along the respective cycle~$D_i$ containing~$c$, and records all vertices it observes into $D'$.
It walks along the cycle until another $c' \in C'$ is reached (it may record the entire cycle~$D_i$, i.e., $c' = c$ may hold).
Finally, $D'$ is inserted as a path into $C$.

The output of the process is invariant under the cyclic orders involved. This means no matter in which order the cycles from~$D$ are processed, the algorithm always returns the same cyclic order.
Figure~\ref{fig:overlapcycle} illustrates the algorithm.

\emph{Runtime of Algorithm~\ref{alg:overlap}.} We may use a doubly-linked list structure for directed cycles $C$ and $D$, and an array $A$ to link vertices $V$ to their position in $C$ in time $\mathcal{O}(1)$. Assuming these data structures, inserting a $D'$ into $C$ can be performed in time $|D'|$. 
Indeed, with these data structures, we can implement the entire algorithm in time $\mathcal{O}(\Sigma_{i \in \{1, \dots{}, m\}} |D_i|)$.
We may also update the array $A$ to include the new vertices of $C$ added from $D$.

\SetKwFor{For}{for (}{)}{}
\begin{algorithm}[t]
  \SetAlgoLined
	\SetAlgoNoEnd
	\caption{The cycle overlap algorithm.} \label{alg:overlap}
	\Fn{\Overlap{C, D}}{
		\SetKwInOut{Input}{Input}
		\SetKwInOut{Output}{Output}
		\Input{directed cycle $C$, overlapping pair-wise disjoint directed cycles $D = \{D_1, \dots{}, D_m\}$ (with $\forall D_i \in D\colon D_i \cap C \neq \emptyset$, $\forall D_i, D_j \in D\colon i \neq j \implies D_i \cap D_j = \emptyset$)}
		\Output{directed cycle containing all vertices $C \cup D$}
		$C'\coloneqq \bigcup_{D_i \in D} C \cap D_i$\;
    \For{$c \in C'$}{
      $D'\coloneqq$ read $D$ from $c$ to next vertex of $C'$\;
      insert $D'$ after $c$ in $C$\;
    }
		\Return{$C$}\;
	}
\end{algorithm}

To get a unique cyclic order,
we repeatedly combine~$C$ with cycles appearing in generators that intersect~$C$. Every cycle in a generator only has to be processed once. Eventually~$C$ contains the entire orbit. With careful management of usage-lists of vertices in cycles of generators, the overall algorithm can be implemented in instance-linear time.

\subsection{Cyclic Order to Equivalent Orbits}
We finally describe how to find equivalent orbits, assuming invariant cyclic orders are given on the orbits.
An invariant cyclic order for the vertices of each orbit moves us one step closer to the orbits of the cycle type graph.
There are however still many potential bijections between orbits: indeed, we do not know how each cyclic order should be rotated.
We therefore describe a procedure to refine the cyclic order further.

We introduce the \emph{enhanced cycle type graph}~$\mathcal{C}'(S)$. 
We are provided an invariant cyclic order for each orbit $\Delta$ of $\Gamma$, which we denote by $C_\Delta$.
First, we add to the cycle type graph (Subsection~\ref{subsec:cycle:type}) a cycle gadget for each $C_\Delta$.
As before, we color the cycle gadget $C_\Delta$ according to its cycle length. 
Next, we enhance all other cycle gadgets using distance information of $C_\Delta$:
in every cycle gadget we mark each directed edge $v_1 \to v_2$ with the length of the (directed) path from~$v_1$ to~$v_2$ in $C_\Delta$ (see Figure~\ref{fig:cyclegadgetcan}).
We write $v_1 \xrightarrow{c} v_2$ whenever the path from $v_1$ to $v_2$ in $C_\Delta$ has length~$c$.
Note that while we use edge-labels for clarity, these can be encoded back into vertex colors (see \cite[Proof of Lemma~15]{DBLP:journals/jacm/KieferPS19}).

Just like with the cycle type graph, the automorphism group of the enhanced cycle type graph~$\mathcal{C}'(S)$ is the centralizer of~$\Gamma$ and~$(C_{\Sym(\Omega)}(\Gamma),\mathcal{C}'(\Gamma))$ is a joint graph/group pair (see Lemma~\ref{lem:aut:of:cycle:type:graph:is:centralizer}). 
However, it is easier to compute the orbit partition of~$\mathcal{C}'(S)$.

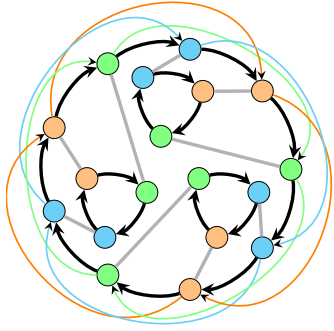
\begin{figure}
  \centering
  \scalebox{0.8}{\begin{tikzpicture}[scale=1.5]
    \clip(-1.8,-1.725) rectangle (2,2);
 \node [circle, draw=white, minimum size=4cm, ultra thick] (c) {};
  \foreach \i in {0,...,8}{
    \pgfmathtruncatemacro{\il}{mod(\i,3)}
    \ifthenelse{\il=1}{
      \node[draw, fill=orange!50, circle] at (c.\i*360/9) (a\i)  {};
    }{
      \ifthenelse{\il=2}{
      \node[draw, fill=cyan!50, circle] at (c.\i*360/9) (a\i)  {};
      }{
        \node[draw, fill=green!50, circle] at (c.\i*360/9) (a\i)  {};
      }
    }
  }

  \foreach \i in {0,...,8}{
    \pgfmathtruncatemacro{\il}{mod(\i+1,9)}
    \draw[ultra thick,-stealth,draw=black] (a\il) to[bend left,looseness=0.75] (a\i); 
  }

  \node [circle, draw=white, minimum size=1.125cm, ultra thick] at (0,0.75) (c0) {};
  \foreach \i in {0,...,2}{
    \ifthenelse{\i=1}{
      \node[draw, fill=orange!50, circle] at (c0.\i*360/3+240+17) (b\i)  {};
    }{
      \ifthenelse{\i=2}{
      \node[draw, fill=cyan!50, circle] at (c0.\i*360/3+240+17) (b\i)  {};
      }{
        \node[draw, fill=green!50, circle] at (c0.\i*360/3+240+17) (b\i)  {};
      }
    }
  }
  \foreach \i in {0,...,2}{
    \pgfmathtruncatemacro{\il}{mod(\i+1,3)}
    \draw[ultra thick,-stealth,draw=black] (b\il) to[bend left,looseness=0.75] (b\i); 
  }

  \node [circle, draw=white, minimum size=1.125cm, ultra thick] at (-0.62,-0.37) (c1) {};
  \foreach \i in {3,...,5}{
    \pgfmathtruncatemacro{\il}{mod(\i,3)}
    \ifthenelse{\il=1}{
      \node[draw, fill=orange!50, circle] at (c1.\i*360/3+17) (b\i)  {};
    }{
      \ifthenelse{\il=2}{
      \node[draw, fill=cyan!50, circle] at (c1.\i*360/3+17) (b\i)  {};
      }{
        \node[draw, fill=green!50, circle] at (c1.\i*360/3+17) (b\i)  {};
      }
    }
  }
  \foreach \i in {3,...,5}{
    \pgfmathtruncatemacro{\il}{mod(\i+1,3)+3}
    \draw[ultra thick,-stealth,draw=black] (b\il) to[bend left,looseness=0.75] (b\i); 
  }

  \node [circle, draw=white, minimum size=1.125cm, ultra thick] at (0.62,-0.37) (c2) {};
  \foreach \i in {6,...,8}{
    \pgfmathtruncatemacro{\il}{mod(\i,3)}
    \ifthenelse{\il=1}{
      \node[draw, fill=orange!50, circle] at (c2.\i*360/3+120+17) (b\i)  {};
    }{
      \ifthenelse{\il=2}{
      \node[draw, fill=cyan!50, circle] at (c2.\i*360/3+120+17) (b\i)  {};
      }{
        \node[draw, fill=green!50, circle] at (c2.\i*360/3+120+17) (b\i)  {};
      }
    }
  }
  \foreach \i in {6,...,8}{
    \pgfmathtruncatemacro{\il}{mod(\i+1,3)+6}
    \draw[ultra thick,-stealth,draw=black] (b\il) to[bend left,looseness=0.75] (b\i); 
  }

  \foreach \i in {0,...,8}{
    \draw[ultra thick,draw=black!30] (a\i) -- (b\i);
  }

  \foreach \i in {0,...,8}{
    \pgfmathtruncatemacro{\il}{mod(\i+3,9)}
    \pgfmathtruncatemacro{\ik}{mod(\i,3)}
    \ifthenelse{\ik=1}{
    \draw[thick,-stealth,draw=orange] (a\il) to[out=60+\i*360/9,in=53+\i*360/9,looseness=1.5] (a\i);
    }{
      \ifthenelse{\ik=2}{
    \draw[thick,-stealth,draw=cyan!50] (a\il) to[out=60+\i*360/9,in=53+\i*360/9,looseness=1.33] (a\i);
    }{
      \draw[thick,-stealth,draw=green!50] (a\il) to[out=60+\i*360/9,in=53+\i*360/9,looseness=1.125] (a\i);
    }
    }
  }
\end{tikzpicture}}
  \caption{Illustration of the automorphism used in Lemma~\ref{lem:cyclegraphcolorref}. The figure shows a canonical cycle $(1,2,3,4,5,6,7,8,9)$ with a generator $(1,2,3)(4,5,6)(7,8,9)$ in the enhanced cycle type graph (edge labels omitted). Arrows indicate the automorphism.} \label{fig:cyclegraphauto}
\end{figure}

In fact, our method of obtaining the orbits of $\mathcal{C}'(S)$ is rather straightforward: we apply the color refinement procedure to the enhanced cycle type graph $\mathcal{C}'(S)$. 
\begin{restatable}{lemma}{cyclegraphcolorref} Color refinement computes the orbit partition of the enhanced cycle type graph.\label{lem:cyclegraphcolorref}
\end{restatable}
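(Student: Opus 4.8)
The plan is to compare the stable partition $R$ returned by color refinement on $\mathcal{C}'(S)$ with the orbit partition $O$ of $\mathcal{C}'(S)$ (equivalently, the orbits of $C_{\Sym(\Omega)}(\Gamma)$, by Lemma~\ref{lem:aut:of:cycle:type:graph:is:centralizer}) and to show $R=O$. One containment is standard and uses no special structure: the orbit partition is equitable and refines the initial coloring, while color refinement returns the \emph{coarsest} equitable refinement of the initial coloring; hence $O$ is at least as fine as $R$, i.e.\ every stable color class is a union of orbits. It therefore remains to establish the converse, namely that any two vertices receiving the same stable color lie in a common orbit of $\Aut(\mathcal{C}'(S))$.

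For the converse I would first reduce to the vertices in $\Omega$: a gadget vertex is adjacent to exactly one $\Omega$-vertex and carries a color recording its generator index and cycle length, so its stable color is determined by, and determines, that of its $\Omega$-vertex. Now take $\Omega$-vertices $u\in\Delta_1$ and $v\in\Delta_2$ with $R(u)=R(v)$. Since each orbit carries the cycle gadget $C_\Delta$ colored by its length, equal stable colors force $|\Delta_1|=|\Delta_2|=:d$. I then define the candidate map $b$ by aligning the two invariant cyclic orders through $u\mapsto v$, namely $b(C_{\Delta_1}^{k}(u))=C_{\Delta_2}^{k}(v)$ for all $k$, extend it to the corresponding gadget copies, use the inverse alignment on $\Delta_2$ when $\Delta_1\neq\Delta_2$ (exactly as in the proof of Lemma~\ref{lem:cyclecentralizer}), and let it be the identity on all remaining vertices. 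As $b$ maps $C_{\Delta_1}$ to $C_{\Delta_2}$ preserving their distance-$1$ labels by construction, $b$ is an automorphism of $\mathcal{C}'(S)$ precisely when it also preserves every generator cycle gadget together with its distance labels. Writing the generators in the reference frame of $C_\Delta$, a short computation shows that preservation of the label $\mathrm{dist}_{C_\Delta}(z,g(z))$ under the shift $b$ is equivalent to $b\circ g=g\circ b$; thus label-preservation of $b$ is the same as $b\in C_{\Sym(\Omega)}(\Gamma)=\Aut(\mathcal{C}'(S))$, which would witness that $u$ and $v$ share an orbit (the automorphism drawn in Figure~\ref{fig:cyclegraphauto} is exactly such a $b$).

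The main obstacle is to show that equal stable colors actually force this label-alignment, and here I would analyze how color refinement behaves on the decorated directed cycle $C_\Delta$. Each $\Omega$-vertex $\omega_i$ is attached to its copies in the generator gadgets, whose outgoing edges carry the labels $\mathrm{dist}_{C_\Delta}(\omega_i,g(\omega_i))$; after encoding these edge labels into vertex colors (as in \cite[Proof of Lemma~15]{DBLP:journals/jacm/KieferPS19}) and using the generator-index coloring, the stable color of the $C_\Delta$-gadget vertex at position $i$ reflects the full label tuple $(\mathrm{dist}_{C_\Delta}(\omega_i,g(\omega_i)))_{g\in S}$. On a directed cycle with such decorations, color refinement propagates along the distance-$1$ edges of $C_\Delta$, so after at most $d$ rounds the stable color of a position encodes the entire cyclic word of label tuples read forward from that position; consequently two positions receive the same stable color if and only if their cyclic words coincide, i.e.\ exactly when the shift aligning them preserves all labels. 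Carefully verifying this readout---that color refinement neither over-merges positions with differing downstream labels nor separates genuinely aligned ones---is the technical heart of the argument. Combining it with the previous paragraph shows that same-colored vertices share an orbit, hence every stable class is contained in an orbit; together with the automatic fact that every stable class is a union of orbits, we conclude $R=O$, i.e.\ color refinement computes the orbit partition of $\mathcal{C}'(S)$.
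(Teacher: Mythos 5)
Your proof is correct and ends in the same place as the paper's: both arguments exhibit the aligning rotation (or, for two distinct orbits, the alignment shift) as an explicit automorphism of $\mathcal{C}'(S)$, so that equally colored vertices lie in a common orbit, and combine this with the standard fact that stable classes are unions of orbits. The route you take to verify that this shift \emph{is} an automorphism is genuinely different, though. The paper argues combinatorially from equitability: it derives properties (1)--(4) of equally colored vertices, introduces the notion of evenly spaced vertices, and shows that the rotation permutes each generator's cycle gadgets among themselves because same-colored vertices must recur in a fixed cyclic pattern along $C_\Delta$. You instead split the verification into two clean pieces: the algebraic identity that a shift along the canonical cycles preserves every label $\operatorname{dist}_{C_\Delta}(z,g(z))$ if and only if it commutes with $g$ (so gadget preservation becomes membership in $C_{\Sym(\Omega)}(\Gamma)$, tying the argument directly to Lemma~\ref{lem:aut:of:cycle:type:graph:is:centralizer}), together with the observation that on a directed cycle equitability forces successor colors of same-colored vertices to agree, so a stable color determines the entire forward cyclic word of label tuples. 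This buys a shorter and more transparent argument; the paper's version stays entirely in the language of colorings and gadgets and never invokes the centralizer characterization. Two remarks for the write-up: the half of your readout claim that you flag as delicate (that refinement does not \emph{separate} genuinely aligned positions) is not actually needed, since the reverse containment is already covered by your first paragraph; and in the case $\Delta_1=\Delta_2$ you should state explicitly that equality of the cyclic words read from $u$ and from $v=C_{\Delta_1}^{\,m}(u)$ makes the word $m$-periodic, which is what guarantees that the rotation preserves labels at \emph{every} position rather than only at the two base points.
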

\begin{proof}
  Let us make some observations on any equitable coloring $\pi$ of the enhanced cycle type graph.
  Consider the canonical cycle $C_\Delta$ of an orbit.
  For a directed cycle of size $n$ to be color-stable, it must be partitioned into $m$ equally sized colors of, say, size $c_s$, where $mc_s=n$.
  These must always appear in-order in the cyclic order.
  This is also true for every cycle gadget, in every generator.
  
  Let us now assume we have two vertices $v_1, v_2$ of a single orbit $\Delta$ with equal color.
  We can conclude the following:
  \begin{enumerate}
    \item In every generator $g$, either both $v_1 \in \supp(g)$ and $v_2 \in \supp(g)$ hold, or neither.
    \item In every generator $g$, $v_1$ and $v_2$ are in a cycle of equal size.
    \item In every generator $g$ where $v_1 \xrightarrow{c_1} v_1'$ and $v_2 \xrightarrow{c_2} v_2'$, $\pi(v_1') = \pi(v_2')$. Furthermore, their distance in the canonical cycle must be equal, i.e., $c_1 = c_2$.
    \item In every generator $g$, if $v_1$ and $v_2$ are contained in cycles $c_1$ and $c_2$, the vector of colors and distances starting from $v_1$ in $c_1$, and $v_2$ in $c_2$ must be identical. In other words, the previous property must hold transitively. 
  \end{enumerate}
  We call a sequence of vertices $v_1, v_2, \dots{}, v_{x}$ \emph{evenly spaced} in a cycle gadget $d$, whenever for each pair $v_i, v_{i+1\pmod x}$ with $i \in \{1, \dots{}, x\}$ the sum of distances in $d$ when going from $v_i$ to $v_{i+1\pmod x}$, i.e., $w = \Sigma_{j \in \{1,\dots{},y\}}w_j$ where $v_i \xrightarrow{w_1} v' \xrightarrow{w_2} \dots{} \xrightarrow{w_y} v_{i+1\pmod x}$, is the same.
  We call $w$ the \emph{spacing}.
  A sequence of vertices is evenly spaced in the canonical cycle, whenever the above holds assuming edge weights of $1$.
  
  Let $p$ be the permutation which rotates any canonical cycle by $m$ to the right, such that equally colored vertices are mapped onto each other.
  Intuitively, this means the canonical cycle is rotated by the least possible amount (see Figure~\ref{fig:cyclegraphauto}).
  We extend $p$ such that all vertices which correspond to a vertex of the canonical cycle are mapped accordingly.
  We show that $p$ must be an automorphism of the enhanced cycle type graph.
  
  By definition, the canonical cycle is mapped back to itself.
  It remains to be shown that every generator $g$, as well as its connections to the canonical cycle, are mapped back to themselves.
  
  Let $g \in S$ be a generator.  
  Let $D_1\dots{}D_m$ denote the cycle gadgets of $g$.
  First of all, note that if a color of vertices is represented in $g$, all of its respective vertices must be in $g$ (1).
  Furthermore, all cycles in which vertices of a color are present, must have the same size (2), and contain the same number of vertices of a color, in the same order (3) and (4).
  
  Indeed, vertices of a color $c$ must be evenly spaced in each cycle gadget $d_i$ due to (4). 
  This immediately implies that they are also evenly spaced in the canonical cycle.
  In fact, vertices of color $c$ in other cycle gadgets must also be evenly spaced with the same spacing as the first cycle, again due to (4).
  
  We now read vertices of $c$ in the cyclic order of the canonical cycle, say, $v_1, \dots{}, v_x$. 
  Using this, we also get a cyclic order of the cycle gadgets, in which cycle gadgets may appear multiple times.
  We denote this with $(d_1, d_2, \dots{}, d_x)$.
  We argue that if cycle gadgets do repeat, they must always repeat in the same order:
  however, since each cycle gadget contains the same number of vertices of $c$, and all of them are evenly spaced in the canonical cycle, this immediately follows.
  For example, vertices $v_1, v_2, v_3, v_4, v_5, v_6$ may lead to $(d_1, d_2, d_3, d_1, d_2, d_3)$, but not $(d_1, d_2, d_3, d_1, d_3, d_2)$. 
  Indeed, such an ordering would contradict the even spacing with respect to the canonical cycle. In the example, $v_1 \to v_4$ in $d_1$ and $v_3 \to v_5$ in $d_3$ would not be equidistant, hence, these vertices could be distinguished. 
  Naturally, the ordering $v_1, \dots{}, v_x$ must also respect the ordering of each cycle gadget individually, since in each cycle gadget vertices of a color are evenly spaced.
  
  Therefore, looking at color $c$, since $p$ maps the canonical cycle ``one to the right'', it maps the vertices of cycle $D_i$ to cycle $D_{i+1\pmod m}$.
  Moreover, as mentioned above, when looking at vertices of $c$, we know it also respects the order of each cycle gadget.
  
  Let us now consider the next color $c' = v_1', \dots{}, v_x'$ according to the cycle gadgets of $g$, i.e., $v_i \xrightarrow{c_i} v_i'$ for $i \in \{1, \dots{}, x\}$ in $g$.
  Immediately, we get that $c_i = c_j$ for all, $i, j \in \{1, \dots{}, x\}$.
  This means that $v_1', \dots{}, v_x'$ is also ordered correctly according to the canonical cycle.
  Indeed, if $v_i \to v_i'$ then $p(v_i) \to p(v_i')$ immediately follows. 
  The automorphism therefore maps the cycle gadgets $(d_1, d_2, \dots{}, c_x)$ back to themselves, and therefore also the generator $g$ back to itself.
  Hence, the automorphism $g$ maps all gadgets related to some orbit $\Delta$ back to itself.
  
  Let us now consider the case where vertices of two different orbits $v \in \Delta$ and $v' \in \Delta'$ are equally colored, i.e., $\pi(v) = \pi(v')$.
  Note that this can only be the case when $\Delta$ and $\Delta'$ are indeed equally sized.
  Indeed, then the arguments above hold true as well, and there must be an automorphism interchanging $\Delta$ and $\Delta'$:
  in particular, the automorphism maps $v$ to $v'$ (and vice versa), precisely mapping the canonical cycles onto each other $C_\Delta$ to $C_\Delta'$, starting at $v$ and $v'$ respectively.
  \end{proof}

\SetKwFor{For}{for (}{)}{}
\begin{algorithm}[t]
  \SetAlgoLined
	\SetAlgoNoEnd
	\caption{High-level procedure to obtain equivalent orbits.} \label{alg:eqorbits}
	\Fn{\Equivalent{$S, \Delta_1, \dots{}, \Delta_m, C_{\Delta_1}, \dots{}, C_{\Delta_m}$}}{
		\SetKwInOut{Input}{Input}
		\SetKwInOut{Output}{Output}
		\Input{group $\langle S \rangle = \Gamma \leq S(\Omega)$, orbit partition $\Delta_1, \dots{}, \Delta_m$ of $\Gamma$, unique cycles $C_{\Delta_1}, \dots{}, C_{\Delta_m}$}
		\Output{partition of $v \in \Omega$ into equivalent orbits}
		overlap each unique cycle $C_{\Delta_i}$ with $S$ to obtain canonical cycle $C_{\Delta_i}'$\;
    construct enhanced cycle type graph $\mathcal{C}'(S)$\;
    $\pi \coloneqq$ apply color refinement to $\mathcal{C}'(S)$\;
    \Return{$\pi$} 
	}
\end{algorithm}

Given our high-level procedure in Algorithm~\ref{alg:eqorbits}, and given that color refinement can be computed in quasi-linear time as previously discussed, leads to the following theorem:
\begin{theorem} Given access to a unique cycle per orbit, there is an instance-quasi-linear algorithm which computes for a joint graph/group pair $(G, S)$ a partition $\pi$ of equivalent orbits. 
Given two equivalent orbits $\Delta_1 \equiv \Delta_2$, there is an algorithm which computes from $\pi$ a corresponding matching $b \colon \Delta_1 \to \Delta_2$ such that for all $\varphi \in \Gamma$ and $\delta \in \Delta_1$, $\varphi(b(\delta)) = b(\varphi(\delta))$ in time $\mathcal{O}(|\Delta_1|)$. \label{lem:finallemcolorref}
\end{theorem}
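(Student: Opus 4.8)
The plan is to realize the two parts of the statement through Algorithm~\ref{alg:eqorbits} followed by a short postprocessing step, relying on the characterization of orbit equivalence via the centralizer. For the first part I would run Algorithm~\ref{alg:eqorbits}: overlap each supplied unique cycle $C_{\Delta_i}$ with the generators in $S$ using the cycle overlap routine of Algorithm~\ref{alg:overlap} to obtain the invariant canonical cycles $C_{\Delta_i}'$, build the enhanced cycle type graph $\mathcal{C}'(S)$, and apply color refinement to obtain a coloring $\pi$. Correctness then follows by chaining the established results. By Lemma~\ref{lem:cyclegraphcolorref}, color refinement computes exactly the orbit partition of $\mathcal{C}'(S)$; by Lemma~\ref{lem:graphgroupcentralizer} (and its analogue for the enhanced graph noted in the text) this is the partition into orbits of the centralizer $C_{\Sym(\Omega)}(\Gamma) = \Aut(\mathcal{C}'(S))$; and by Lemma~\ref{lem:cyclecentralizer} two $\Gamma$-orbits $\Delta_1,\Delta_2$ satisfy $\Delta_1 \equiv \Delta_2$ if and only if some element of $\Aut(\mathcal{C}'(S))$ maps $\Delta_1$ onto $\Delta_2$. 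Since the centralizer commutes with $\Gamma$ it permutes the $\Gamma$-orbits, so $\Delta_1 \equiv \Delta_2$ holds precisely when a vertex of $\Delta_1$ and a vertex of $\Delta_2$ receive the same color under $\pi$ (and by transitivity of $\equiv$ it suffices to test one representative per orbit). Thus, restricting $\pi$ to $\Omega$ and merging the $\Gamma$-orbits that share a color yields the desired partition into classes of equivalent orbits; this bookkeeping is instance-linear once $\pi$ is available.

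For the runtime I would argue that each stage is instance-quasi-linear. Computing all canonical cycles via repeated cycle overlap is instance-linear, as established for Algorithm~\ref{alg:overlap}, since each cycle of a generator and each orbit point is touched a constant number of times. The graph $\mathcal{C}'(S)$ has $\mathcal{O}(|\Omega| + \enc(S))$ vertices -- one per point of $\Omega$, one per support point of each generator, and the canonical-cycle gadgets whose total length is $\sum_\Delta |\Delta| = |\Omega|$ -- together with a comparable number of labeled edges, so it is built in instance-linear time. The distance labels on the cycle gadgets may carry large values, but they are folded back into vertex colors by the standard construction cited after Figure~\ref{fig:cyclegadgetcan}, incurring at most a logarithmic blow-up. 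Finally, color refinement on a graph of this size runs in time $\mathcal{O}((|V|+|E|)\log|V|)$, which is instance-quasi-linear, so the entire computation of $\pi$ is instance-quasi-linear.

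For the second part I would read the matching $b$ directly off $\pi$, using the natural symmetric action of the orbits in question (the setting of Lemma~\ref{lem:columninterchangeeq}). Since the centralizer of $\Sym(\Delta)$ in $\Sym(\Delta)$ is trivial for $|\Delta|>2$, no centralizer element of $\Gamma$ can move a point of $\Delta_1$ within $\Delta_1$; hence every centralizer orbit -- equivalently, every color class of $\pi$ -- meets $\Delta_1$ in exactly one vertex and meets each equivalent orbit $\Delta_2$ in exactly one vertex. Defining $b(v)$, for $v \in \Delta_1$, to be the unique $w \in \Delta_2$ with $\pi(v) = \pi(w)$ therefore yields a bijection $\Delta_1 \to \Delta_2$; being the restriction of a centralizer element, it commutes with every $\varphi \in \Gamma$, that is $\varphi(b(\delta)) = b(\varphi(\delta))$, and by the corollary following Lemma~\ref{lem:graphgroupcentralizer} it is the unique such $b$. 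Algorithmically I would bucket the vertices of $\Delta_2$ by color into an array indexed by color and then look up each vertex of $\Delta_1$, costing $\mathcal{O}(|\Delta_1|)$ time, with the cases $|\Delta_1|\le 2$ handled directly.

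I expect the second part to be the main obstacle: one must show that the color classes of $\pi$ induce a genuine, $\Gamma$-commuting bijection rather than merely a coarse correspondence between $\Delta_1$ and $\Delta_2$. This is exactly where the natural symmetric action is essential, as it forces the centralizer to act rigidly within each orbit and collapses each color class to a single representative per orbit, and where the uniqueness statement of the corollary guarantees that the extracted matching is the correct one. The first part, by contrast, is essentially an assembly of Lemmas~\ref{lem:cyclegraphcolorref}, \ref{lem:graphgroupcentralizer}, and \ref{lem:cyclecentralizer} with the instance-linear constructions, where the only quantitative care required is the size bound on $\mathcal{C}'(S)$ and the label-to-color encoding.
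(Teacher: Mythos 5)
Your proposal is correct and follows essentially the same route as the paper: it instantiates Algorithm~\ref{alg:eqorbits}, invokes Lemma~\ref{lem:cyclegraphcolorref} together with the centralizer characterization (Lemmas~\ref{lem:cyclecentralizer} and~\ref{lem:graphgroupcentralizer}) to read off equivalent orbits from the color classes, and uses the uniqueness of the commuting bijection for orbits with natural symmetric action to extract the matching in $\mathcal{O}(|\Delta_1|)$ time. The paper leaves most of these details implicit, so your write-up is simply a more explicit assembly of the same ingredients.
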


\section{Conclusion and Future Work}
Exploiting our concept of joint graph/group pairs, we proposed new, asymptotically faster algorithms for the SAT-symmetry interface. 
However, most of the new concepts and approaches of this paper do not only apply to the domain of SAT, but also for example to MIP \cite{DBLP:journals/mpc/PfetschR19} and CSP~\cite{DBLP:conf/cp/FlenerFHKMPW02}.
More computational tasks should be considered in this context, the most prominent one arguably being pointwise stabilizers \cite{seress_2003}.

Our new algorithms exploit subroutines with highly efficient implementations available, but otherwise do not use any complicated data structures. We intend to implement the algorithms and integrate them into the symmetry detection preprocessor \textsc{sassy} \cite{DBLP:journals/corr/abs-2302-06351}.

Finally, in some classes of SAT instances, 
more complex symmetry structures may arise. Analyzing and taking advantage of these structures is potential future work.
For example, in the pigeonhole principle, \textsc{BreakID} finds overlapping row interchangeability groups and breaks these groups partially.
By virtue of being overlapping, the symmetry breaking constraints produced are \emph{not} guaranteed to be complete.

Another example for a complex symmetry structure is the \emph{wreath product} of two symmetric groups, i.e., $S_n\;wr\;S_m$. 
These wreath products naturally occur as the automorphism groups of tree-like structures. 
Procedures to detect and exploit such groups (e.g., by first using blocks of imprimitivity \cite{seress_2003} followed by the algorithms of this paper) could be of practical interest.
\bibliography{main}
\bibliographystyle{plain}

\end{document}